\documentclass[11pt]{article}

\usepackage[dvipsnames,svgnames,table]{xcolor}
\usepackage{enumitem}
\usepackage{graphicx}
\usepackage{caption}
\usepackage{subcaption}
\usepackage{fancybox}
\usepackage{comment}
\usepackage{nameref}
\usepackage{mathtools}

\usepackage{tikz}
\usetikzlibrary{bbox}

\usepackage{amsmath}
\usepackage{amssymb}
\usepackage{amsthm}
\usepackage[bottom]{footmisc}

\usepackage[ruled, noend, linesnumbered]{algorithm2e}

\usepackage{thm-restate}

\usepackage{float}
\definecolor{ForestGreen}{rgb}{0.1333,0.5451,0.1333}
\definecolor{DarkRed}{rgb}{0.8,0,0}
\definecolor{Red}{rgb}{1,0,0}
\usepackage[linktocpage=true,
pagebackref=true,colorlinks,
linkcolor=ForestGreen,citecolor=ForestGreen,
bookmarks,bookmarksopen,bookmarksnumbered]
{hyperref}

\usepackage{cleveref}
\AddToHook{cmd/appendix/before}{\crefalias{section}{appendix}} 

\newcommand{\authornewline}{%
\end{tabular}\par\medskip
\begin{tabular}[t]{c}%
}
\newcommand{\authorblock}[3]{%
\parbox[t]{0.4\textwidth}{%
\centering
#1\par
\small #2\par
\small \texttt{#3}\par
}%
}

\newtheorem{theorem}{Theorem}[section]

\newtheorem{lemma}[theorem]{Lemma}
\newtheorem{observation}[theorem]{Observation}

\newtheorem{claim}[theorem]{Claim}

\newtheorem{definition}[theorem]{Definition}

\newtheorem*{theorem*}{Theorem}
\newtheorem*{corollary*}{Corollary}
\newtheorem*{conjecture*}{Conjecture}
\newtheorem*{lemma*}{Lemma}
\newtheorem*{thm*}{Theorem}
\newtheorem*{prop*}{Proposition}
\newtheorem*{obs*}{Observation}
\newtheorem*{definition*}{Definition}

\newtheorem*{remark*}{Remark}
\newtheorem*{rec*}{Recommendation}

\newenvironment{fminipage}%
  {\begin{Sbox}\begin{minipage}}%
  {\end{minipage}\end{Sbox}\fbox{\TheSbox}}

\newcommand\Otil{\widetilde{O}}

\newcommand{\E}[1]{\mathop{{}\mathbb{E}}\left[#1\right]}

\renewcommand{\E}{\mathbb{E}}

\DeclareMathOperator{\dist}{dist}
\DeclareMathOperator{\diam}{diam}

\global\long\def\Otil{\tilde{O}}

\newcommand{\cut}{\operatorname{cut}}

\SetKwComment{Comment}{\textcolor{gray}{$\triangleright$\ }}{}
\SetKwRepeat{Do}{do}{while}

\usepackage{fullpage}

\title{Low-Stretch Spanning Trees\\ via Smoothed Analysis of Dijkstra's Algorithm}
\author{
  \authorblock{Ioannis Dorkofikis\thanks{The research leading to these results has received funding from the starting grant “A New Paradigm for Flow and Cut Algorithms” (no. TMSGI2 218022) and grant no. 200021 204787 of the Swiss National Science Foundation.}}%
  {ETH Zürich}%
  {ioannis.dorkofikis@inf.ethz.ch}
  \and
  \authorblock{Bernhard Haeupler\thanks{The research leading to these results was partially funded by the Ministry of Education and Science of Bulgaria's support for INSAIT as part of the Bulgarian National Roadmap for Research Infrastructure and by the European Research Council (ERC) under the European Union's Horizon Europe research and innovation program (ERC Advanced grant agreement 101268062 and ERC Starting grant agreement 949272).}}%
  {INSAIT, Sofia University ``St.~Kliment Ohridski'' and ETH Zürich}%
  {bernhard.haeupler@insait.ai}
  \authornewline
  \authorblock{Maximilian Probst Gutenberg\footnotemark[1]}%
  {ETH Zürich}%
  {maximilian.probst@inf.ethz.ch}
  \and
  \authorblock{Antti Roeyskoe\footnotemark[2]}%
  {ETH Zürich}%
  {antti.roeyskoe@inf.ethz.ch}
  \authornewline
  \authorblock{Aurelio Sulser\footnotemark[1]}%
  {ETH Zürich}
  {aurelio.sulser@inf.ethz.ch}
  \and
  \authorblock{Gernot Z\"ocklein\footnotemark[1]}%
  {ETH Zürich}%
  {gernot.zoecklein@inf.ethz.ch}
}
\date{}

\date{}

\begin{document}
\maketitle
\begin{abstract}

Given an undirected weighted graph $G$, a $\gamma$-approximate low-stretch spanning tree (LSST) $T \subseteq G$ is a tree that approximates the distance metric of $G$ up to a $\gamma$-factor in expectation. Currently, existing algorithms to find a provably good LSST carefully construct an approximate shortest-path tree from an arbitrary source. The resulting algorithms are intricate. In contrast, practitioners observed that a much simpler heuristic performs surprisingly well: choose an arbitrary root, run Dijkstra's algorithm, and use the resulting shortest-path tree as an LSST. 

In this paper, we give a smoothed analysis of shortest-path tree algorithms, such as Dijkstra's algorithm, that explains this behavior. We show that adding a small perturbation to the weights of the input graph suffices to turn the shortest path tree rooted at an arbitrary node in the resulting graph into an $\tilde{O}(1)$-approximate LSST.

We further show that the set of perturbations can be computed efficiently from few low-diameter decompositions (LDDs). Thus, our proof is also constructive in the sense of giving a novel approach to computing LSSTs.
\end{abstract}

\pagenumbering{gobble}

\pagebreak

\pagenumbering{arabic}

\section{Introduction}

Low-stretch spanning trees, and more broadly tree embeddings, are the cornerstone of a now-standard paradigm in graph algorithms: replace the input graph by a single tree that approximates its metric, on which cuts, flows, distances, and linear-algebraic primitives become more tractable and easier to reason about. Hence, low-stretch spanning trees and tree embeddings, as defined below, have been studied intensively over the past decades \cite{alon1995graph, bartal1996probabilistic, bartal1998approximating, fakcharoenphol2003tight, bartal2004graph, elkin2005lower, abraham2008nearly, abraham2012using, forster2019dynamic, chechik2020dynamic, forster2021dynamic, rozhovn2022deterministic, becker2024decentralized, kyng2025random, fletcher2026spanning}.

\begin{definition}[Low-Stretch Spanning Tree]
Given a weighted undirected graph $G = (V,E,w:E\rightarrow [W])$, we say that a tree $T \subseteq G$ drawn from some distribution $\mathcal{T}$ over spanning trees of $G$ is a \emph{randomized $\gamma$-approximate low-stretch spanning tree} if for every vertex pair $u,v \in V$,
\begin{equation*}
\E_{T \sim \mathcal{T}}[\dist_T(u,v)] \le \gamma \dist_G(u,v).    
\end{equation*}
A tree embedding is defined analogously but without the restriction that the trees $T$ have to be subgraphs of $G$.
\end{definition}
 
 Most notably, this paradigm underlies the first near-linear time solvers for Laplacian systems \cite{spielman2004nearly, koutis2014approaching, koutis2011nearly, kelner2013simple}, where low-stretch spanning trees serve as preconditioners. More recently, it has also been a key ingredient in the first provably fast almost-linear-time algorithms for min-cost flow \cite{chen2025maximum, van2023deterministic, chen2024almost}.

\paragraph{Low-Stretch Spanning Trees in Practice.} In practice, LSSTs were first implemented and evaluated \cite{hoske2015nearly, deweese2016empirical} 
as a primitive for practical Laplacian graph solvers. Very recently, they have been used in the implementation \cite{mincostFlowImpl} of the min-cost flow solver from \cite{chen2025maximum}. During their implementation, the authors of \cite{mincostFlowImpl} made a curious observation. They noted that the often intricate algorithms for computing low-stretch trees are matched, and sometimes even outperformed, by a strikingly simple heuristic: run Dijkstra's algorithm from an arbitrary vertex and use the resulting shortest-path tree as a low-stretch spanning tree. 
 
From the perspective of algorithm analysis, this observation is somewhat matched by the fact that existing polylogarithmic-stretch constructions can be viewed, at a high level, as carefully engineered approximate shortest-path trees grown from an arbitrary source (c.f. \cite{elkin2005lower, abraham2008nearly, abraham2012using, fletcher2026spanning}). This gives an intuitive explanation for why such a phenomenon seems plausible in practice. On the other hand, simple examples show that the \emph{exact} shortest-path tree from an arbitrary source can have very poor stretch. 

\paragraph{Beyond Worst-Case Analysis.} Since worst-case analysis cannot fully explain the behavior seen in practice, the algorithm analysis community has introduced average-case analysis and smoothed analysis \cite{spielman2004smoothed}. While the former framework often lends an interesting perspective, as noted in \cite{spielman2004smoothed}, it is often unconvincing to analyze performance over a random distribution of instances, as real-world instances are often highly structured and might thus look very different from the average input. 

Instead \cite{spielman2004smoothed} introduced smoothed analysis, which analyzes algorithmic performance under small perturbations to the input. More precisely, given the set of input instances $I \in \mathcal{I}$ and an appropriate definition of a small neighborhood $N(I)$ for each instance, along with a distribution over the neighborhood,\footnote{Note that formally, our perturbations can be large, but most mass is on a small neighborhood, and mass is decaying with increasing distance. This is similar to the setup in \cite{spielman2004smoothed}, which adds Gaussian perturbations.} it measures the expected cost of algorithm $\mathcal{A}$ evaluated on the small neighborhood of each instance
\begin{equation*}
    \max_{I \in \mathcal{I}} \mathbb{E}_{I' \sim N(I)}[\textnormal{cost}(\mathcal{A}(I'))].
\end{equation*}

In this paper, we ask the following question: 
\begin{center}    
\emph{Given any input graph $G$, can we apply a small perturbation to its weights so that thereafter any exact shortest-path tree from an arbitrary vertex is provably a good low-stretch spanning tree?}
\end{center}

Equivalently, we ask whether a smoothed analysis~\cite{spielman2004smoothed} of Dijkstra's algorithm for computing low-stretch spanning trees can recover the behavior observed in practice. 
 
\subsection{Our Contribution} 

We answer the above question in the affirmative: a simple perturbation of the edge weights suffices to recover a polylogarithmic stretch guarantee.

\begin{restatable}{theorem}{mainthm}\label{thm:main}
Let $G=(V,E,w)$ be a graph with integral, polynomially-bounded edge weights, and let $\varepsilon \in [\frac{1}{n}, 1]$. Then, there is a random perturbation $\Delta \ge 0$ satisfying $\mathbb{E}[\Delta(e)] = \varepsilon \cdot O(w(e))$ for each $e \in E$, such that, for any $u,v,r \in V$, the shortest-path tree $\widetilde{T}_r$ rooted at $r$ in $\widetilde{G} = (V, E, \widetilde{w} = w + \Delta)$ satisfies that
\begin{equation*}
\E\left[ \frac{\dist_{\widetilde{T}_r}(u, v)}{\dist_{\widetilde{G}}(u, v)}\right] = \Otil\left(\frac{1}{\varepsilon}\right).    
\end{equation*}
\end{restatable}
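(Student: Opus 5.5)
The plan is to build $\Delta$ from low-diameter decompositions (LDDs) at every distance scale, as the abstract suggests. Let $D_i = 2^i$ for $i = 0,1,\dots,L = O(\log n)$; polynomially bounded weights give $L = O(\log n)$. For each scale $i$ I sample an independent LDD $\mathcal{P}_i$ of $G$ with diameter parameter $D_i$. This is a partition into clusters of strong diameter at most $D_i$ such that every edge $e$ is cut with probability at most $O(\log n \cdot w(e)/D_i)$. Every edge $e$ cut by $\mathcal{P}_i$ then receives perturbation $\varepsilon D_i / L$, and $\Delta(e) = \sum_i \varepsilon D_i \mathbf{1}[e \text{ cut by } \mathcal{P}_i]/L$. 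Summing over scales gives $\mathbb{E}[\Delta(e)] \le \sum_i (\varepsilon D_i/L)\cdot O(\log n\, w(e)/D_i) = \varepsilon \cdot O(w(e)\log n / L)$, which is $\varepsilon\cdot O(w(e))$ once the normalisation is tuned, possibly at the cost of an extra polylog absorbed in the $\Otil$. I may instead use a geometric distance-dependent LDD so that the expectation is exactly $O(\varepsilon w(e))$. Intuitively, crossing a scale-$i$ cluster boundary now costs an extra $\varepsilon D_i/L$. Shortest paths in $\widetilde G$ are therefore discouraged from entering and leaving clusters repeatedly, so the shortest-path tree looks hierarchically clustered, like the trees of Elkin et al.\ and Abraham--Neiman.

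For the stretch bound, fix $u,v,r$ and let $d = \dist_{\widetilde G}(u,v)$. Let $i$ be the scale with $D_i \approx \Theta(L d/\varepsilon)$, up to polylog factors, and let $C$ be the scale-$i$ cluster containing $u$. The key structural lemma I would aim for is this: \emph{if the tree path from $r$ to $u$ enters $C$ and $v \in C$, then the tree paths to $u$ and $v$ diverge only inside $C$, and $\dist_{\widetilde T_r}(u,v) \le O(D_i)$ plus perturbation terms.} Heuristically, inside $C$ the shortest paths from the entry region to $u$ and to $v$ both have length at most roughly $\widetilde{\diam}(C)$. The tree path from the least common ancestor to $u$ and to $v$ must lie within the $\widetilde{w}$-ball of radius about $\dist_{\widetilde G}(r,u)-\dist_{\widetilde G}(r,\mathrm{lca})$, and I would bound this ball by $O(D_i)$. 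The mechanism is that any excursion that leaves $C$ and returns pays an extra $2\varepsilon D_i/L$, which exceeds the gain possible from the fact that $u$ and $v$ are $d$-close. Making this precise requires the boundary penalty to dominate the difference $|\dist(r,u)-\dist(r,v)| \le d$, which is why I choose $D_i \gtrsim Ld/\varepsilon$. Then $\dist_{\widetilde T_r}(u,v)/d = O(D_i/d) = \Otil(1/\varepsilon)$.

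The bad event is that $u$ and $v$ are separated at a scale $j \ge i$. Here I would use a union bound over scales $j \ge i$. The separation probability at scale $j$ is $O(\log n\, d/D_j)$, and in that case the tree distance is bounded using the next scale $j' > j$ at which they share a cluster, giving stretch $O(D_{j'}/d)$. The resulting sum $\sum_j (d \log n/D_j)(D_{j+1}/d) = O(L\log n)$ is $\Otil(1)$ per level, which keeps the total expectation $\Otil(1/\varepsilon)$. The subtlety is that $d$ is a random quantity depending on $\Delta$. I would handle this by bounding $d$ between $\dist_G(u,v)$ and $\dist_G(u,v)+\sum_{e\in P}\Delta(e)$ on a fixed $G$-shortest path $P$, using the deterministic lower bound for the denominator.

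The main obstacle is the structural lemma. An exact shortest-path tree is determined globally by $r$, and its path to $u$ can enter and exit $C$ in complicated ways before reaching $u$. I would first try to prove it through a nested version: make the LDDs hierarchical by refining $\mathcal{P}_{i-1}$ inside $\mathcal{P}_i$. Then, by induction on scale, the set of tree vertices inside any cluster forms a subtree of diameter $O(D_i)$ plus accumulated perturbations, and boundary penalties of larger scales are constant on paths that stay inside $C$, so they cannot alter tree decisions within $C$. If this fails, I would fall back to an argument based on the random-shift LDD of Miller et al., whose exponential shifts give the needed memorylessness.
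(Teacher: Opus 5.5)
\textbf{There is a genuine gap: your key structural lemma is false, and deterministic boundary penalties cannot make it true.} Take $G$ a unit-weight cycle and a scale-$i$ cluster $C$ (an arc) on the side away from $r$. Whatever nonnegative penalties you add, $\widetilde T_r$ omits exactly one cycle edge $(a,b)$. Suppose that edge lies in $C$. Then the tree path to $a$ enters $C$ and $b\in C$, yet the two tree paths enter $C$ through different boundary edges and meet only at $r$. So $\dist_{\widetilde T_r}(a,b)\approx n\gg D_i$. For the same reason, $\widetilde T_r$ restricted to $C$ is in general a forest rather than a subtree, so the nested induction you propose fails too.

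The underlying issue is that the scale-$i$ penalty is identical on every edge of $\delta(C)$. It therefore cancels when you compare two routes that each enter $C$ once. It only discourages leaving and re-entering, which is not the failure mode. The real bad event is a near-tie, seen from $r$, between the best way of entering $C$ and the best way that avoids that entry edge. If these differ by at most about $2\widetilde w(u,v)$, the tree paths to $u$ and $v$ can use different boundary edges. The whole proof hinges on showing this near-tie has probability $\Otil(d/(\varepsilon D_i))$, i.e.\ on anti-concentration of that race at resolution $d$. Your proposal has no source for it. Once the scale-$i$ penalties cancel, what remains is the pattern of cuts at other scales along the competing routes. You give no anti-concentration argument for that, and none is evident in a graph with many correlated competing routes. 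Your MPX fallback does not help either: the memorylessness of the MPX shifts concerns the race among cluster centers, not the race among entry edges of $C$.

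\textbf{How the paper supplies the missing ingredient.} It makes the perturbation itself random: $\Delta(e)=\epsilon D_{\ell_e}-X_e$ with independent $X_e\sim\mathrm{Geom}(\Theta(\log n/(\epsilon D_{\ell_e})))$. Condition on the hierarchy and on $X_e$ for $e\notin\delta(C)$. \Cref{lem:gap} then reveals the clocks on $\delta(C)$ one unit at a time, always choosing a clock off the current winning path when possible. Hence the last surviving clock lies on the winner, and its memoryless remaining lifetime lowers only paths through that edge. This gives $\mathbb{P}[\text{gap}\le 2\widetilde w(u,v)] = O(\widetilde w(u,v)\log n/(\epsilon D_\ell))$, and \Cref{clm:good-events} then produces a common ancestor inside $C$.

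\textbf{A second missing piece is your ``$O(D_i)$ plus perturbation terms''.} The argument needs $\diam(\widetilde G[C])=O(D_i)$ for all clusters simultaneously, with high probability. Per-pair expectations do not suffice, because the common ancestor is selected by the same randomness. Nor can you control a fixed path, since a single ball may cut it many times. The paper obtains this bound from four ingredients:
\begin{itemize}
\item nested LDDs;
\item the shortcutting claim;
\item a Freedman-based distance-preservation bound for GEOM and MPX;
\item a Chernoff induction over levels.
\end{itemize}
With your independent, non-nested LDDs, higher-scale boundaries carrying penalty $\varepsilon D_j/L\gg D_i$ can even cut through $C$.

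(Minor: summing over the $L+1$ scales gives $\mathbb{E}[\Delta(e)]=O(\varepsilon w(e)\log n)$, not $O(\varepsilon w(e)\log n/L)$. Rescaling $\varepsilon$ fixes this.)
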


We remark that the inverse dependence of the algorithm performance on the size of the noise $\varepsilon$ is intuitively to be expected and standard in smoothed analysis. One way to interpret \Cref{thm:main} is that, viewing weighted graphs as points in $\mathbb{R}^E$, every graph is ``close'' to many graphs on which the SSSP tree is already well-behaved. We point out that while each edge weight is roughly preserved in expectation, the chosen perturbation leads to highly correlated weights. As discussed later, this is necessary to achieve provable guarantees.

The SSSP tree $\widetilde{T}$ of the randomly perturbed graph $\widetilde{G}$ additionally has the desirable property of also being a randomized low-stretch spanning tree of the \emph{original} graph $G$. As our choice of perturbations can be constructed rather straightforwardly from few low-diameter decompositions, this yields a second perspective on our result: our analysis provides an efficient, simple, and completely novel algorithm to compute a low-stretch spanning tree with small polylogarithmic loss.\footnote{We believe this may even be the more interesting statement. Thus, our result may also be interpreted to be somewhat closer in spirit to \cite{kelner2006randomized} where the solution to a randomly perturbed instance is efficiently mapped back to a solution of the original instance.}

\begin{restatable}{theorem}{LowStretchTree}
\label{thm:LowStretchTree}
Given an input graph $G = (V,E,w)$, there is a simple $\widetilde{O}(m)$-time algorithm based on random perturbations to compute a randomized $O(\log^5 n)$-approximate low-stretch spanning tree.
\end{restatable}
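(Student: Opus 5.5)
The plan is to derive \Cref{thm:LowStretchTree} from the construction behind \Cref{thm:main}, taking $\varepsilon = 1$. The key point is that the perturbation is built from $O(\log n)$ low-diameter decompositions, one per distance scale. For each scale $D_i = 2^i$, sample an LDD $\mathcal{P}_i$ of $G$ with diameter $D_i$ such that each edge $e$ is cut with probability $O(w(e)\log n/D_i)$. Then add to every edge cut by $\mathcal{P}_i$ a penalty of order $D_i/\log n$ (rescaled by $\varepsilon$). Summing over scales gives $\E[\Delta(e)] = \sum_i O(w(e)\log n/D_i)\cdot D_i/\log n = O(w(e)\log W)$. Normalising the penalties by $\log W = O(\log n)$ (polynomial weights) gives $\E[\Delta(e)] = O(w(e))$. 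The algorithm is then: compute the LDDs, form $\widetilde w = w+\Delta$, and run Dijkstra from an arbitrary root $r$. Each LDD costs $\widetilde O(m)$ (e.g.\ exponential-shift clustering), there are $O(\log n)$ of them, and Dijkstra is $O(m+n\log n)$, so the total running time is $\widetilde O(m)$. For non-polynomial weights, the standard trick is to contract edges much lighter than the current scale, or to use only the $O(\log n)$ relevant scales per edge; I would treat this as routine.

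The stretch is measured in $G$. Since $\dist_T(u,v)$ is the same sum of edge weights under $w$ and under $\widetilde w$ (only the weights differ), and $w \le \widetilde w$, we get $\dist^{w}_{\widetilde T_r}(u,v)\le \dist^{\widetilde w}_{\widetilde T_r}(u,v)$. So it suffices to bound $\E[\dist_{\widetilde T_r}^{\widetilde w}(u,v)]$ by $\mathrm{polylog}(n)\cdot \dist_G(u,v)$. For an edge $(u,v)$ (which suffices by linearity and the triangle inequality along a shortest path), I would like to use \Cref{thm:main} directly. Its guarantee, however, bounds the expectation of a ratio whose denominator $\dist_{\widetilde G}(u,v)$ is random. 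To convert it, I would apply Cauchy--Schwarz or a conditioning argument, or better, reopen the proof of \Cref{thm:main}. That proof presumably shows, scale by scale, that the tree path between $u$ and $v$ exceeds $\mathrm{polylog}\cdot\dist_{\widetilde G}(u,v)$ only if $u,v$ are separated by $\mathcal P_i$ for an appropriate $i$, and that the extra length is then $O(D_i)$ times polylog factors. Summing $\Pr[\text{separated at scale } i]\cdot D_i\cdot\mathrm{polylog} = O(w(u,v)\log n)\cdot\mathrm{polylog}$ over $O(\log n)$ scales gives an unconditional bound on $\E[\dist_{\widetilde T_r}(u,v)]$ in terms of $w(u,v)$ rather than the random $\dist_{\widetilde G}$. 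This route also avoids the issue that $\dist_{\widetilde G}(u,v)\le \widetilde w(u,v)$ could be large.

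Counting logs to reach $O(\log^5 n)$: one factor comes from the LDD cut probability, one from the number of scales, one from normalising the penalty, and roughly two from the shortest-path-tree detour analysis inside \Cref{thm:main} (a vertex's Dijkstra path must stay within a cluster, and charging over nested scales). I expect the main obstacle to be this conversion from the ratio guarantee of \Cref{thm:main} to an absolute expected-distance bound against the \emph{original} $\dist_G$. The direct approach would bound $\E[\dist_{\widetilde T}]\le \E[\mathrm{ratio}^2]^{1/2}\E[\dist_{\widetilde G}^2]^{1/2}$, which needs second moments of $\Delta$ that may be heavy-tailed. I would therefore first try to extract the per-scale charging lemma from the proof of \Cref{thm:main} and apply it with $w(u,v)$ as the yardstick, using $\dist_{\widetilde G}(u,v)\ge \dist_G(u,v)$ only in the harmless direction.
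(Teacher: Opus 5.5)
There is a genuine gap: the stretch bound itself is never established. You cannot use \Cref{thm:main} as a black box. Its $\widetilde O(1/\varepsilon)$ hides the polylogarithm, it says nothing about how $\Delta$ is computed, and in the paper it is itself deduced from the proof of this theorem. Your fallback rests on a guessed mechanism that is false: that $\dist_{\widetilde T_r}(u,v)$ can be large only if $u$ and $v$ are separated at the relevant scale. Lying in a common low-diameter cluster $C$ does not make the tree path short. On a cycle, $u$ and $v$ may share a small cluster while Dijkstra reaches them from opposite sides, so that $(u,v)$ is exactly the omitted edge. What you need is that $\widetilde T(r,u)$ and $\widetilde T(r,v)$ meet at some vertex $x\in C$, and this is the heart of the paper's proof. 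Each edge receives perturbation $\epsilon D_{\ell_e}-X_e$ with independent, memoryless $X_e\sim\mathrm{Geom}$. \Cref{lem:gap} (a clock-revealing argument) then shows that the shortest $r$--$u$ path beats every $r$--$u$ path avoiding its $\delta(C)$-edges by more than $2\widetilde w(u,v)$, except with probability $O(\widetilde w(u,v)\log n/(\epsilon D_\ell))$. Averaging over the lower levels replaces $\widetilde w(u,v)$ by $(1+\epsilon Ls)\,w(u,v)$ (\Cref{clm:gap-prob}). This failure term, of order $\epsilon^{-1}\log n\cdot w(u,v)/D_\ell$, actually dominates the separation probability. Your penalties are deterministic once the LDDs are fixed, so there is no independent randomness on the boundary edges of $C$ to create such a gap, and you offer no replacement argument.

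The second missing ingredient is that clusters must keep diameter $O(D_\ell)$ in the \emph{perturbed} graph. This is what bounds $\dist_{\widetilde T}(u,v)\le\dist_{\widetilde G}(u,x)+\dist_{\widetilde G}(x,v)$ once the common ancestor $x$ exists. The paper obtains it (\Cref{lem:dist-to-diam}) from three things:
\begin{enumerate}
\item a \emph{nested} hierarchy, in which $\mathcal C_{\ell}$ is computed inside the clusters of $\mathcal C_{\ell+1}$, so only lower-level penalties fall inside $C$;
\item a growth factor $\alpha=\Omega(\log n)$, so that the additive $\gamma D_{\ell-1}$ error is $O(D_\ell)$ and the Chernoff step over independent lower-level clusters concentrates;
\item distance preservation of GEOM/MPX, proved via shortcutting and Freedman's inequality.
\end{enumerate}
Your construction uses independent LDDs of all of $G$ at ratio $2$, and both of the first two conditions fail. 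A scale-$D_j$ cut carrying penalty $D_j/\log^2 n\gg D_i$ can pass through a scale-$D_i$ cluster with probability up to order $D_i\log n/D_j$, independent of $w(u,v)$. With $D_i/D_{i-1}=2$, the $O(\log n)\,D_{i-1}$ error is not $O(D_i)$. So even the paper's argument would not transfer to your construction, and your final count of logarithms is not derived from anything. The peripheral steps do match the paper: reducing to edges, using $\dist^{w}_{\widetilde T}(u,v)\le\dist^{\widetilde w}_{\widetilde T}(u,v)$, and the $\widetilde O(m)$ running time.
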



\subsection{Technical Overview} 

We next give a high-level overview, some intuition for the design of the perturbations used in \Cref{thm:LowStretchTree}, and then an overview of our proof. We work with a connected, simple input graph $G=(V,E,w)$ with positive, polynomially bounded, integer edge weights $w : E \rightarrow [W]$.

\paragraph{A First Attempt at Perturbing the Graph.} We first describe the most natural perturbation strategy and outline the difficulties that we encounter in the process. We will require $\mathbb{E}[\Delta(e)] = O(w(e))$ (otherwise the perturbation has more influence than the original metric) and $\Delta \geq 0$ since the shortest path tree is undefined in the presence of negative edges. 

The most natural approach is to consider i.i.d. sampling of the perturbation.  However, we next demonstrate that sampling perturbations i.i.d. fails on the cycle graph $C_n$. Recall first, that on the cycle graph, any LSST is formed by removing a single edge $e$ from $C_n$, which yields stretch $n-1$ for the removed edge. Consider the shortest path tree $T_r$ from some arbitrary vertex $r$. On $C_n$, it omits the edge opposite to the root $r$. In order to recover $\widetilde{O}(1)$ stretch for every edge, we need to redistribute the omitted edge over at least $\widetilde{\Omega}(n)$ edges with good probability. 

Now consider the case where perturbations are i.i.d. and bounded, i.e. each $\Delta(e) \in [0, C]$ for some universal constant $C > 0$. In this case, we can apply a Chernoff bound to get concentration of the distances in the perturbed graph, which yields that no vertex is at distance further than $O(\sqrt{n})$ from expected w.h.p. But this distributes the omitted edge only over $\widetilde{O}(\sqrt{n})$ edges opposite to $r$. Thus, on average, these edges are still left with $\widetilde{\Omega}(\sqrt{n})$ stretch.

Let us therefore assume that random variables $\Delta(e)$ are unbounded. However, consider now the graph where we replace each edge on $C_n$ with $\alpha = 10 \log n$ multi-edges (again of weight $1$). Distance computations only consider the edge among the multi-edges with the smallest perturbation. But by Markov, any fixed edge $e$ has perturbation at most $2 \cdot \mathbb{E}[\Delta(e)] = O(w(e))$ with probability at least $1/2$. Consequently, one of the multi-edges thus has weight $O(w(e))$ w.h.p., and this is true for all original edges in $C_n$. Since we can condition on this event, we thus find ourselves again in the previous bad scenario.

We conclude that i.i.d. sampling does not seem sensible to obtain the desired perturbations.

\paragraph{Correlating Perturbations via Low-Diameter Decompositions.} Reflecting on the above arguments, it seems essential that parallel edges are strongly correlated. More broadly, this could be interpreted as a notion where perturbations of edges should be more correlated, the closer they are. 

A natural way to relate nearby edges is low-diameter decompositions. 

\begin{definition}[Low Diameter Decomposition]\label{def:ldd}
Given $G = (V,E,w)$ and a diameter parameter $D > 0$, a \emph{low-diameter decomposition (LDD)} with slack $s > 0$ is a randomized partition $\mathcal{C}$ of the vertex set $V$ that satisfies the following two properties:
\begin{enumerate}
    \item \textbf{Diameter.} Every cluster $C \in \mathcal{C}$ has strong diameter $\diam(G[C])$ at most $D$ with high probability.
    
    \item \textbf{Separation probability.} Any two vertices $u,v \in V$ are separated (appear in distinct clusters) with probability at most $s \cdot
    \dfrac{\dist_G(u,v)}{D}$. \label{prop:sepprob}
\end{enumerate}
\end{definition}

There are multiple known algorithms to compute LDDs with optimal slack $s=O(\log n)$; see, for example, \cite{bartal1996probabilistic, CalinescuKR00ckroriginal, miller2013mpxoriginal}. In this work, we consider geometric ball carving \cite{bartal1996probabilistic} and MPX \cite{miller2013mpxoriginal}. We describe these algorithms later.

We call the partition $\mathcal{C}$ a clustering. For a cluster $C \in \mathcal{C}$, we write $\delta(C)=\delta_G(C)$ to denote all the edges of $G$ with exactly one endpoint in $C$, and write $\delta(\mathcal{C}) \coloneqq \bigcup_{C \in \mathcal{C}} \delta(C)$. We say an edge $e = (u, v) \in E$ is \textit{cut} by $\mathcal{C}$ if $e \in \delta(\mathcal{C})$, or equivalently if $u$ and $v$ appear in distinct clusters of $\mathcal{C}$.

\paragraph{A Formal Perturbation Algorithm.} For a growth factor $\alpha = \Theta(\log n)$, let $L \coloneqq \lceil \log_\alpha nW \rceil = O(\log n)$ so that $\diam(G) \leq \alpha^L$. Our algorithm proceeds in three steps:
\begin{enumerate}
    \item We first compute a hierarchy of low-diameter decompositions $\mathcal{C}_L, \dots, \mathcal{C}_0$ where $\mathcal{C}_L \coloneqq \{V\}$, each $\mathcal{C}_{\ell}$ for $\ell \in \{L - 1, \dots, 1\}$ is a low-diameter decomposition with diameter parameter $D_\ell = \alpha^\ell$ of the clusters of $\mathcal{C}_{\ell + 1}$ (i.e. of the graph $G \setminus \delta(\mathcal{C}_{\ell + 1})$), and $\mathcal{C}_0 \coloneqq \{\{v\} \mid v \in V\}$.
    \item Then, for any edge $e$ let $\ell_e \coloneqq \max \{\ell : e \in \delta(\mathcal{C}_\ell)\}$ be the highest level such that $e$ is separated by $\mathcal{C}_\ell$. For each edge $e$, sample a perturbation
    \begin{equation*}
    \Delta(e) = \epsilon D_{\ell_e} - X_e,
    \end{equation*}
    where $\epsilon = O(1 / \log^2 n)$ is a noise magnitude parameter and each $X_e$ is an independent geometric random variable with mean $\Theta\left(\epsilon D_{\ell_e} / \log n\right)$.\footnote{This mean is chosen so that $X_e \leq \epsilon D_{\ell_e}$ with high probability. Technically, if $\Delta(e) < 0$ occurs, we set $\Delta(e) = 0$.}
    
    We denote the perturbed graph with edge weights $\widetilde{w}(e) \coloneqq w(e) + \Delta(e)$ by $\widetilde{G}$.

    \item Finally, for an arbitrary source $r$ (chosen independently of the realization of $\Delta$), output the shortest path tree $T_r$ in $\widetilde{G}$, breaking ties arbitrarily. 
\end{enumerate}
We write $\widetilde{T}$ for the resulting tree $T$ with edge weights inherited from the perturbed graph $\widetilde{G}$.

\paragraph{Sketch of the Analysis.} 

We now outline the proof of \Cref{thm:LowStretchTree} by showing that there exist LDD algorithms for which this perturbation algorithm is guaranteed to output an $\widetilde{O}(1)$-approximate LSST. 
For this, we fix $\epsilon = \Theta(1 / \log^2 n)$. Our full analysis extends to smaller $\epsilon$ with a linear loss in the obtained stretch guarantee, which obtains \Cref{thm:main}.





By linearity of expectation, it suffices to bound the expected stretch of every edge:
\begin{equation*}
\frac{\mathbb{E}[\dist_{\widetilde{T}}(u,v)]}{w(u,v)} = O(\log^5 n), \qquad \forall (u,v) \in E.
\end{equation*}
To this end, we establish the following level-by-level bound: for some constant $c > 0$, for every $\ell \in [L]$ and $(u,v) \in E$,


\begin{equation}\label{eq:technical-main}
    \mathbb{P}\left[\dist_{\widetilde{T}}(u,v) \geq c D_\ell\right] \leq
\frac{w(u,v)}{D_\ell}\cdot O\left(\log^3 n\right).
\end{equation}

It immediately follows from \Cref{eq:technical-main} that the expected stretch of the output tree satisfies the desired bound. Indeed, summing over all $L$ possible ranges $[c D_\ell, c D_{\ell + 1})$ for $\dist_{\widetilde{T}}(u, v)$, the expected stretch contribution incurred for each is $\alpha \cdot O(\log^3 n)$, for a total of $O(\log^5 n)$.

To prove \Cref{eq:technical-main}, fix a level $\ell \in [L]$ and an edge $(u,v) \in E$. The LDD guarantees ensure that with probability at least $1-\frac{w(u,v)}{D_\ell} \cdot O(\log n)$, $u$ and $v$ are in the same cluster in $\mathcal{C}_\ell$. Fix a hierarchy $\{\mathcal{C}_\ell\}_\ell$ where this occurs, so that $u,v \in C \in \mathcal{C}_\ell$. If the diameter $\diam(\widetilde{G}[C])$ of the cluster did not get blown up by more than a constant by weight increases from levels of the hierarchy below $\ell$, then the distance between $u$ and $v$ in $\widetilde{G}$ is still $O(D_\ell)$. Of course, this is not sufficient for them to be close in $\widetilde{T}$, as their lowest common ancestor could be far away.
However, if the two vertices did have a common ancestor $x$ also in the cluster $C$, then by the triangle inequality and the fact that subpaths of shortest paths are shortest paths, we would have 
\begin{align*}    
\dist_{\widetilde{T}}(u,v)
&\leq \dist_{{\widetilde{T}}}(x,u) + \dist_{{\widetilde{T}}}(x,v)\\
&\leq \dist_{\widetilde{G}}(x,u) + \dist_{\widetilde{G}}(x,v)\\
&\leq 2 \diam(\widetilde{G}[C])\\
&= O(D_\ell).
\end{align*}
This establishes two clear objectives:
\begin{enumerate}
    \item Bounding the probability that no common ancestor appears inside $C$.
    \item Proving guarantees on the diameters of clusters in the perturbed graph.
\end{enumerate}

\paragraph{Objective 1.} We first focus on showing that a common ancestor exists in $C$ with sufficiently high probability. If $r \in C$, then $r$ can serve as the common ancestor of $u$ and $v$ inside the cluster; thus, we may assume that $r \not\in C$.
Let $\widetilde{T}(r,u)$ be the $r$--$u$ path in $\widetilde{T}$, and $\widetilde{T}(r,v)$ likewise the $r$--$v$ path. Then, $u$ and $v$ have a common ancestor in $C$ if and only if $\widetilde{T}(r,u)$ and $\widetilde{T}(r,v)$ share an edge on the boundary $\delta(C)$ of $C$, as these are tree paths and cannot rejoin after diverging.

Write $P_1$ for $\widetilde{T}(r,u)$ and $P'_2$ for $\widetilde{T}(r,v)$ extended by the edge $(v, u)$. These are now two paths from $r$ to $u$. These paths have similar lengths as long as $\widetilde{w}(u, v)$ is not overly stretched, as
\begin{equation*}
    \widetilde{w}(P_1) \leq \widetilde{w}(P'_2) = \widetilde{w}(u, v) + \widetilde{w}(\widetilde{T}(r, v)) \leq 2 \widetilde{w}(u, v) + \widetilde{w}(P_1),
\end{equation*}
thus $\widetilde{w}(P'_2) - \widetilde{w}(P_1) \leq 2 \widetilde{w}(u, v)$ always holds. Let $P_2$ be the shortest $r$--$u$ path that does not share any edges with $P_1$ on the boundary of $C$ and let $I_{\text{gap}}$ be the event that
\begin{equation*}
    \widetilde{w}(P_2) - \widetilde{w}(P_1) > 2 \widetilde{w}(u, v).
\end{equation*}
Then, if $I_{\text{gap}}$ occurs, there exists a common ancestor inside $C$, as $P_2'$ must not be a candidate for $P_2$. We want to show that this event, i.e. the presence of a large gap between the shortest and second-shortest way for $r$--$u$ paths to enter $C$, is likely.

This setting is similar to the separation probability analysis of \cite{miller2013mpxoriginal}. We adapt their argument, which relies on the fact that in the perturbations $\Delta(e) = \epsilon D_{\ell_e} - X_e$ we selected, the variables $X$ follow the \emph{memoryless} geometric distribution. 

Fix the values $X_e$ for each $e \not\in \delta(C)$. For each crossing edge $e \in \delta(C)$ we think of $X_e$ as follows. Associated with every edge $e$ is a clock that is initially alive and displays $0$. The time at which the clock breaks is geometrically distributed, and $X_e$ is defined to be the value displayed by the clock when it breaks. At each time step, we pick one alive clock and reveal one more unit of its lifetime: either the clock breaks, in which case its displayed value is fixed forever, or it survives and its displayed value increases by one. By the memorylessness of the geometric distribution, conditioned on still being alive, the remaining time until the clock breaks has the same distribution.

The key observation is that, by revealing clocks outside the current winning path whenever possible, we can maintain the invariant that the current winning path always contains an alive clock. Eventually, only one alive clock remains. This final clock belongs to the winning path but to no competing path that avoids its boundary edges. By the memorylessness of the geometric distribution, its remaining lifetime is an independent geometric random variable, which contributes only to the winning path. Consequently, $\widetilde{w}(P_2)-\widetilde{w}(P_1)$ stochastically dominates a geometric random variable with stopping probability $O(\frac{1}{\epsilon D_\ell} \log n) = O(\frac{1}{D_\ell} \log^3 n)$. In particular,
\begin{equation*}
\mathbb{P}(I_{\text{gap}}) = \mathbb{P}\left(\widetilde{w}(P_2) - \widetilde{w}(P_1) > 2 \widetilde{w}(u, v)\right) \geq 1 - \frac{\widetilde{w}(u,v)}{D_\ell}O(\log^3 n).
\end{equation*}

Finally, we address the fact that we want to bound the probability of $I_{\text{gap}}$ in terms of the original weight $w(u,v)$ rather than the perturbed weight $\widetilde w(u,v)$. Consider fixing just the levels $\mathcal{C}_L, \dots, \mathcal{C}_\ell$ of the hierarchy. If at this point $u, v \in C \in \mathcal{C}_\ell$, then $\mathbb{E}[\widetilde{w}(u, v) \mid \mathcal{C}_L, \dots, \mathcal{C}_\ell] = O(w(u, v))$ still holds. We then apply the law of total probability over the randomness of selecting $\mathcal{C}_{\ell - 1}, \dots, \mathcal{C}_1$ to obtain
\begin{equation*}
\mathbb{P}(I_{\text{gap}} \mid \mathcal{C}_L, \dots, \mathcal{C}_\ell) \geq 1 - \frac{\mathbb{E}[\widetilde{w}(u,v) \mid \mathcal{C}_L, \dots, \mathcal{C}_\ell]}{D_\ell}O(\log^3 n) = 1 - \frac{w(u, v)}{D_\ell}O(\log^3 n).
\end{equation*}
Since $u$ and $v$ are coclustered in $\mathcal{C}_\ell$ with probability $1 - \frac{w(u, v)}{D_\ell} O(\log n)$, this completes the proof.

\paragraph{Objective 2.}
In the remainder of the technical overview, we justify our assumption that the diameter guarantees of the LDD hierarchy are preserved within constants in the perturbed graph. 
We begin by showing bounds for distances under perturbations caused by a single LDD that hold \emph{with high probability}. We then provide a blackbox reduction from such high-probability \emph{distance preservation} at a single level to high-probability \emph{diameter preservation} for the entire hierarchy.


\paragraph{Distance Preserving LDDs.} 

Consider computing a single clustering $\mathcal{C} = \textsc{LDD}(G, D)$, and increasing weights of cut edges by $\epsilon D$. What high-probability bounds on distances in the resulting graph $\widetilde{G}$ can we obtain?


Recall that the probability an LDD algorithm with optimal slack $O(\log n)$ cuts an edge $e$ is at most $\frac{w(e)}{D} \cdot O(\log n)$. Thus, the weight of any fixed edge or path \emph{in expectation} increases multiplicatively by at most $(1 + O(\epsilon \log n))$. A standard Chernoff bound also shows that if all edges were cut independently,
\begin{equation}\label{eq:concentration-distance-independent}
    \widetilde{w}(P) \leq (1 + O(\epsilon \log n)) \cdot w(P) + O(\epsilon D \log n)
\end{equation}
would hold with high probability for a fixed path $P$, and thus also for distances.

The obstacle to an immediate concentration bound is the correlation between nearby edges being cut. Consider for example the geometric ball carving LDD algorithm (\Cref{alg:geom}), which repeatedly picks a center $x$ in the remaining graph and a geometrically distributed radius $R$, and then carves out a radius-$R$ ball around $x$. In a situation such as \Cref{fig:octopus-1}, there is a radius for which this ball cuts the path a large number of times. Since any fixed radius is sampled with a non-negligible probability, a concentration bound on the weight of a \emph{fixed} path in $\widetilde{G}$ is impossible. 

However, the effect of a single cluster on the \emph{distance} between two vertices can be bounded, as a path in $\widetilde{G}$ can ``shortcut'' through the cluster $C$ to only be cut twice by it: see \Cref{fig:octopus-2}.

\begin{figure}[h]
\centering

\begin{tikzpicture}[bezier bounding box]
  \coordinate (v) at (0,3);
  \coordinate (s) at (-5,0);
  \coordinate (t) at (5,0);

  \foreach \x in {-3,-1,1,3}
    \draw[thick] (v)
      .. controls ({0.5*\x},2.2) and ({0.85*\x},1.2) .. (\x,0);

  \draw[red,thick]
    (-3.65,0)
    .. controls (-3.65,{16/3}) and (3.65,{16/3}) .. (3.65,0)
    .. controls (3.65,-0.28) and (3.3575,-0.5) .. (3,-0.5)
    .. controls (2.3,-0.5) and (2.3,0.5) .. (2,0.5)
    .. controls (1.7,0.5) and (1.7,-0.5) .. (1,-0.5)
    .. controls (0.3,-0.5) and (0.3,0.5) .. (0,0.5)
    .. controls (-0.3,0.5) and (-0.3,-0.5) .. (-1,-0.5)
    .. controls (-1.7,-0.5) and (-1.7,0.5) .. (-2,0.5)
    .. controls (-2.3,0.5) and (-2.3,-0.5) .. (-3,-0.5)
    .. controls (-3.3575,-0.5) and (-3.65,-0.28) .. (-3.65,0)
    -- cycle;

  \begin{scope}[blue!75!black,line width=1.7pt,
    dash pattern=on 5pt off 3pt,dash phase=1pt]
    \draw (s) -- (-3,0);
    \draw (t) -- (3,0);
    \draw (-3,0) -- (-1,0);
    \draw (3,0) -- (1,0);
    \draw (1,0) -- (-1,0);
  \end{scope}

  \foreach \p in {s,t,v}
    \fill (\p) circle (1.6pt);
  \foreach \x in {-3,-1,1,3}
    \fill (\x,0) circle (1.6pt);
  \node[above] at (v) {$x$};
  \node[left] at (s) {$s$};
  \node[right] at (t) {$t$};
  \node[blue!75!black,above] at (-4.4,0) {$P$};
  \node[red,right] at (3.4,2) {$B(x, R)$};
  
\end{tikzpicture}

\caption{A ball around $x$ with an appropriate radius cuts the $s$-$t$ path $P$ a large number of times.}
\label{fig:octopus-1}

\vspace{0.4cm}

\begin{tikzpicture}[bezier bounding box]
  \coordinate (v) at (0,3);
  \coordinate (s) at (-5,0);
  \coordinate (t) at (5,0);

  \draw[thick] (s) -- (t);
  \foreach \x in {-3,-1,1,3}
    \draw[thick] (v)
      .. controls ({0.5*\x},2.2) and ({0.85*\x},1.2) .. (\x,0);

  \draw[red,thick]
    (-3.65,0)
    .. controls (-3.65,{16/3}) and (3.65,{16/3}) .. (3.65,0)
    .. controls (3.65,-0.28) and (3.3575,-0.5) .. (3,-0.5)
    .. controls (2.3,-0.5) and (2.3,0.5) .. (2,0.5)
    .. controls (1.7,0.5) and (1.7,-0.5) .. (1,-0.5)
    .. controls (0.3,-0.5) and (0.3,0.5) .. (0,0.5)
    .. controls (-0.3,0.5) and (-0.3,-0.5) .. (-1,-0.5)
    .. controls (-1.7,-0.5) and (-1.7,0.5) .. (-2,0.5)
    .. controls (-2.3,0.5) and (-2.3,-0.5) .. (-3,-0.5)
    .. controls (-3.3575,-0.5) and (-3.65,-0.28) .. (-3.65,0)
    -- cycle;

  \draw[white,line width=2.8pt]
    (s) -- (-3,0)
    .. controls (-2.55,1.2) and (-1.5,2.2) .. (v)
    .. controls (1.5,2.2) and (2.55,1.2) .. (3,0)
    -- (t);
  \begin{scope}[blue!75!black,line width=1.7pt,
    dash pattern=on 5pt off 3pt,dash phase=1pt]
    \draw (s) -- (-3,0);
    \draw (t) -- (3,0);
    \foreach \x in {-3,3}
      \draw (v)
        .. controls ({0.5*\x},2.2) and ({0.85*\x},1.2) .. (\x,0);
  \end{scope}

  \foreach \p in {s,t,v}
    \fill (\p) circle (1.6pt);
  \foreach \x in {-3,-1,1,3}
    \fill (\x,0) circle (1.6pt);
  \node[above] at (v) {$x$};
  \node[left] at (s) {$s$};
  \node[right] at (t) {$t$};
  \path (-3,0)
    .. controls (-2.55,1.2) and (-1.5,2.2) .. (v)
    node[blue!75!black,midway,left=5pt] {$P'$};
  \node[red,right] at (3.4,2) {$B(x, R)$};
\end{tikzpicture}

\caption{Changing the $s$-$t$ path being considered to ``shortcut'' through the cluster $C$ upper bounds the distance increase due to the cluster by $O(D)$, since the cluster has diameter at most $D$.}

\label{fig:octopus-2}
\end{figure}

Fix a shortest $s$-$t$ path $P$, and consider a clustering $\mathcal{C}$. For each cluster cutting the path, suppose we either pay for each edge of the path cut by the cluster, or only for the first and last such edge plus the cost to shortcut through the cluster. Then, the distance increase between $s$ and $t$ is at most this total cost. Indeed, a simple greedy algorithm can construct an $s$-$t$ path in $\widetilde{G}$ with at most this increase in weight: walk edges of the path $P$ one at a time, except that whenever you enter a cluster $C$ for which the shortcut cost was paid, take a path of length at most $\diam(G[C]) \leq D$ in the interior of that cluster to the last vertex on $P$ inside that cluster. \Cref{clm:shortcutting} formalizes a slightly more general statement. The proof is deferred to \Cref{sec:deferred}.
\begin{restatable}[Shortcutting]{claim}{shortcutting}\label{clm:shortcutting}
Let $\Delta$ be edge weight increases, $\widetilde{G} \coloneqq (V, E, \widetilde{w} \coloneqq w + \Delta)$, and $\mathcal{C} = \{C_1, \dots, C_k\}$ be a clustering. 

For a pair of vertices $u, v \in V$ and a path $P$ from $u$ to $v$, let $\{P_1,P_2,\dots,P_k\}$ be a partition of the edges of $P$ so that edges contained in a cluster $C_i$ are assigned to that cluster's part $P_i$, and edges between two clusters are assigned arbitrarily to either cluster's part. Then,
\begin{equation*}
    \dist_{\widetilde G}(u,v) \le w(P) + \sum_i\min\{\Delta(P_i), D + 2M\}
\end{equation*}
where $D \coloneqq \max_{C \in \mathcal{C}} \diam(\widetilde{G}[C])$ is the maximum cluster diameter in $\widetilde{G}$ and $M \coloneqq \max_{e \in \delta(\mathcal{C})} \Delta(e)$ the maximum weight increase of any cluster boundary edge.
\end{restatable}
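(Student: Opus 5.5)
We construct an explicit $u$--$v$ walk in $\widetilde{G}$ greedily along $P$ and bound its $\widetilde{w}$-weight by the right-hand side, charging each part $P_i$ separately. Let $S \subseteq [k]$ be the set of indices with $\Delta(P_i) > D + 2M$; these are the clusters we shortcut, and for $i \notin S$ we simply pay $\Delta(P_i)$. Since $\widetilde{w}(P) = w(P) + \sum_i \Delta(P_i)$, it suffices to show the following. For each $i \in S$, we can replace the edges of $P_i$ by a segment of cost at most $w(P_i) + D + 2M$, with all other edges of $P$ kept as they are.

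The walk is built as follows. Traverse $P$ from $u$. Suppose we reach a vertex $x$ and the next edge $e$ of $P$ belongs to $P_i$ for some $i \in S$ whose cluster has not yet been shortcut. Let $a$ be the endpoint of $e$ inside $C_i$; if $e$ is internal to $C_i$, then $a = x$. Let $b$ be the last vertex of $P$ lying in $C_i$. Our walk then pays $\widetilde{w}(e) \le w(e) + M$ if $e$ is a boundary edge, and $0$ extra if $a = x$. Next it takes a shortest $a$--$b$ path inside $\widetilde{G}[C_i]$, of length at most $D$. From $b$ it continues along $P$. If the edge of $P$ leaving $b$ belongs to $P_i$ and is a boundary edge, we pay at most $w + M$ for it. Otherwise that edge belongs to another part and is charged there normally. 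All edges of $P$ strictly between $x$ and $b$ are skipped.

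The main subtlety is charging correctly, because skipped portions may contain edges of other parts $P_j$, and a boundary edge assigned to $P_i$ might be traversed outside the shortcut. Skipping edges only decreases cost, since weights are nonnegative, so it never hurts $j \ne i$; we simply drop their charges, which keeps the upper bound valid. For $C_i$ itself, the edges of $P_i$ that we actually pay for are at most two boundary edges: the entering edge and possibly the exiting edge. Every other edge of $P_i$ lies on the skipped portion between the first touch of $C_i$ and $b$. Any edge of $P_i$ is incident to $C_i$, so it lies between the first and last visits of $P$ to $C_i$, possibly adjacent to $b$. Hence the cost charged to $P_i$ is at most $w(P_i) + 2M + D$, where the $w$-weights of the kept edges are bounded by $w(P_i)$.

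We also need each cluster to be shortcut at most once and the walk never to return to a cluster already in $S$. After jumping to $b$, the last occurrence of $C_i$ on $P$, the remaining suffix never revisits $C_i$ except possibly through the single exiting edge, so the process terminates at $v$. The bookkeeping is cleanest if we process clusters in order of their first appearance on $P$ and note that the walk only moves forward along $P$. Summing the charges over all parts gives $\dist_{\widetilde G}(u,v) \le w(P) + \sum_{i \notin S} \Delta(P_i) + \sum_{i \in S} (D + 2M)$, which is the claimed bound.
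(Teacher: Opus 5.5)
Your proposal is correct and is essentially the paper's argument: a forward greedy walk along $P$ that pays $\Delta(P_i)$ for non-shortcut parts. For each shortcut cluster it pays at most one entering and one exiting boundary edge (each costing at most $M$ extra) plus an internal path of $\widetilde{w}$-length at most $D$, and each cluster is shortcut at most once because the walk only moves forward. The only cosmetic difference is that you jump to the last vertex of $P$ in $C_i$ and then handle the exit edge explicitly, whereas the paper jumps to the furthest endpoint along $P$ of an edge of $P_i$, which may lie just outside $C_i$.
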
 

This motivates an analysis considering the costs of clusters one by one, for which the simplest LDD algorithm to consider is geometric ball carving.

\vspace{1em}
\begin{algorithm}[H]
\caption{\textsc{GEOM}($G$, $D$)}
\label{alg:geom}
$p \gets \frac{c_{geom}\log n}{D}$\\
$U \gets V$\\
$\mathcal{C} \gets \emptyset$\\
\While{$U \neq \emptyset$}{
    $x \gets $ an arbitrary element of $U$\\
    Sample $R \sim \mathrm{Geom}\left(p\right)$\\
    $C \gets B_{G[U]}(x,R)$\\
    $U \gets U \setminus C$\\
    $\mathcal{C} \gets \mathcal{C} \cup \{C\}$\\
}
\Return $\mathcal{C}$\\
\end{algorithm}
\vspace{1em}

As usual for an analysis of $\textsc{GEOM}$, we consider an alternate process 
producing the same distribution of clusterings. This process proceeds in discrete time steps $t$, and maintains an active ball $B_t$ in the remaining graph $G[U]$ by center $x_t$ and radius $R_t$. Each time step, it either carves out the active ball with probability $p$ and starts a new ball in the remaining graph, or otherwise increases the radius of the active ball by $1$.

We say an edge $e$ is \textit{threatened} by the active ball at time $t$ if $e \in \delta_{G[U]}(B_t)$. The standard analysis of GEOM proves that any fixed edge $e$ is cut with probability at most $p \cdot w(e)$ by noting that any edge is threatened for at most $w(e)$ time steps, which must be consecutive, as once the active ball threatens an edge, it will either cut it or grow to contain it in $w(e) - 1$ time steps.

If a ball was cut at time $t$, \Cref{clm:shortcutting} gives us that the cost of the produced cluster is at most $\min(\epsilon D \cdot X_t, 3D)$, where $X_t$ denotes the number of edges being threatened at time $t$. By the standard analysis, $\sum_t X_t \leq w(P)$, thus the expected total cost is at most $p \cdot \epsilon D \cdot w(P) = O(\epsilon \log n) \cdot w(P)$. But now, since the cost of any individual time step is bounded, we can use a martingale concentration inequality to obtain a high probability bound on the total cost significantly exceeding the expectation.

With this approach, we obtain in \Cref{sec:geom-analysis} the following:

\begin{lemma}[Informal version of \Cref{lem:geom-distance-preserving}]\label{lem:geom-distance-preserving-informal}
    For $\mathcal{C} = \textsc{GEOM}(G, D)$, with high probability,
    \begin{equation*}
        \dist_{\widetilde{G}}(u, v) \leq (1 + O(\epsilon \log n)) \dist_G(u, v) + O(D \log n).
    \end{equation*}
\end{lemma}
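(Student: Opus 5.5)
We fix $u,v$ and a shortest $u$--$v$ path $P$ in $G$. Here $\widetilde{G}$ is obtained by adding $\epsilon D$ to every edge cut by $\mathcal{C}=\textsc{GEOM}(G,D)$, so $\Delta(e)=\epsilon D\cdot\mathbf{1}[e\in\delta(\mathcal{C})]$ and $M=\epsilon D\le D$. We condition on the high-probability event that every cluster has diameter $O(D)$ in $G$; each cluster is a ball of radius $R=O(D)$, since $R\le O(D)$ w.h.p.\ by the choice of $p$ and a union bound over at most $n$ balls. Since interior edges of a cluster are unperturbed, the cluster diameter is the same in $\widetilde{G}$. We then apply \Cref{clm:shortcutting}, assigning each cut edge of $P$ to the cluster that was carved first among its two endpoints' clusters, i.e.\ the ball that cut it. This gives
\begin{equation*}
\dist_{\widetilde G}(u,v)\le w(P)+\sum_{C\in\mathcal{C}}\min\{\epsilon D\cdot |P\cap \text{cut}(C)|,\ O(D)\}.
\end{equation*}
It therefore suffices to bound the random sum $S=\sum_C \min\{\epsilon D\, k_C,\,c D\}$ by $O(\epsilon\log n)\,w(P)+O(D\log n)$ w.h.p., where $k_C$ is the number of edges of $P$ cut by ball $C$.

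To bound $S$ we use the discrete-time reformulation. At each step $t$ the active ball is either carved (probability $p$, independently of the past) or grows by one. Let $X_t$ be the number of edges of $P$ threatened at time $t$. If carving happens at time $t$, the edges of $P$ cut by that cluster are exactly those threatened at $t$, so the cost incurred is $Y_t:=\mathbf{1}[\text{carve at }t]\cdot\min\{\epsilon D X_t, cD\}$. Since $X_t$ is determined by the history before the coin at time $t$, we get $\E[Y_t\mid\mathcal{F}_{t-1}]=p\min\{\epsilon D X_t,cD\}\le p\,\epsilon D X_t$. By the standard threatening argument, each edge $e$ of $P$ is threatened for at most $w(e)$ consecutive steps, so $\sum_t X_t\le w(P)$ deterministically. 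Hence the compensator satisfies $\sum_t \E[Y_t\mid\mathcal{F}_{t-1}]\le p\epsilon D\,w(P)=c_{geom}\epsilon\log n\cdot w(P)$.

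For concentration, we apply Freedman's inequality (or a multiplicative Chernoff-type martingale bound) to $Z=\sum_t (Y_t-\E[Y_t\mid\mathcal{F}_{t-1}])$. The increments are bounded by $cD$, and the predictable variance is at most $\sum_t p\,(cD)\min\{\epsilon D X_t,cD\}\le cD\cdot \mu$, where $\mu:=c_{geom}\epsilon\log n\,w(P)$. Freedman then gives $Z\le O(\mu)+O(D\log n)$ with probability $1-n^{-\Omega(1)}$, using the standard inequality $\sqrt{D\mu\log n}\le\mu+D\log n$. Combining, with high probability
\begin{equation*}
\dist_{\widetilde G}(u,v)\le w(P)+O(\epsilon\log n)\,w(P)+O(D\log n),
\end{equation*}
and a union bound over all pairs $u,v$ finishes the proof.

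The main obstacle is making the martingale setup rigorous. First, $X_t$ must be shown to be predictable: the threatened set at time $t$ depends only on the current center, the current radius and the remaining set $U$, none of which depend on the coin at time $t$. Second, the process has an unbounded number of steps, which we handle by stopping once $U=\emptyset$, or equivalently by letting $Y_t=0$ thereafter. A smaller subtlety is that an edge of $P$ may be threatened by several successive balls. The standard argument still charges each edge at most $w(e)$ steps in total: an edge is threatened by only one ball, because when that ball is carved the edge is either cut or has an endpoint removed and so is no longer in $G[U]$. I would verify this carefully.
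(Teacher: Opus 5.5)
Your proposal is correct and follows the paper's proof essentially step for step. Both fix a shortest path $P$, charge each carved ball $\min\{\epsilon D X_t, O(D)\}$ via \Cref{clm:shortcutting} (assigning each cut edge to the first ball that cut it), use $\sum_t X_t \le w(P)$, apply Freedman's inequality with increments $O(D)$ and predictable variance $O(D)\cdot \epsilon s\, w(P)$, and finish with a union bound over pairs. Your use of the natural filtration $\mathcal{F}_{t-1}$ and your check that each edge is threatened by only one ball are slightly more careful than the paper's conditioning on $X_t$.
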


Comparing this to \Cref{eq:concentration-distance-independent}, we note that the only difference is that the additive term is not scaled down by $\epsilon$.

We also obtain in \Cref{sec:mpx-analysis} a slightly weaker result for the $\textsc{MPX}$ LDD algorithm \cite{miller2013mpxoriginal}. We analyze an alternative process producing the same distribution of clusterings that grows the clusters one-by-one in random order. We use the fact that in this process, each vertex changes the cluster it is in at most $O(\log n)$ times with high probability, thus each edge is threatened at most $O(w(e) \log n)$ times. The rest of the proof proceeds similarly to the proof for $\textsc{GEOM}$, and gives:

\begin{lemma}[Informal version of \Cref{lem:mpx-distance-preserving}]\label{lem:mpx-distance-preserving-informal}
    For $\mathcal{C} = \textsc{MPX}(G, D)$, with high probability,
    \begin{equation*}
        \dist_{\widetilde{G}}(u, v) \leq (1 + O(\epsilon \log^2 n)) \dist_G(u, v) + O(D \log n).
    \end{equation*}
\end{lemma}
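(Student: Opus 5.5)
We prove the MPX bound along the same lines as the GEOM bound (\Cref{lem:geom-distance-preserving-informal}). The only change is the accounting of how often an edge is threatened. First we fix the setup. Let $\mathcal{C} = \textsc{MPX}(G, D)$, where each vertex $x$ draws an independent exponential shift $\delta_x$ with rate $\beta = \Theta(\log n / D)$, and every vertex $y$ joins the cluster of the $x$ minimising $\dist_G(x,y) - \delta_x$. Every cut edge receives a weight increase of at most $\epsilon D$. Fix $u,v$ and a shortest $u$--$v$ path $P$ in $G$. Apply \Cref{clm:shortcutting} with each path edge assigned to one of the clusters containing an endpoint. Since clusters have diameter $O(D)$ with high probability and $M \le \epsilon D$, we get
\begin{equation*}
\dist_{\widetilde G}(u,v) \le w(P) + \sum_{C} \min\{\epsilon D \cdot |\delta(C) \cap P|,\, O(D)\}.
\end{equation*}
It therefore remains to show that the right-hand sum is $O(\epsilon \log^2 n)\, w(P) + O(D\log n)$ with high probability.

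To make the per-cluster costs amenable to a martingale argument, we use the sequential view of MPX. Process the centers in decreasing order of shift; this is the same as revealing shifts in order of the exponential order statistics. Equivalently, grow balls around centers one at a time in random order: when a center $x$ is processed, it claims every vertex $y$ whose current offer $\dist(x,y)-\delta_x$ beats its current best. By memorylessness of exponentials, the final assignment of $y$ changes only when a new record is set. For a fixed vertex, the number of record-setting centers in a random order is a sum of independent Bernoulli variables with total mean $H_n = O(\log n)$. Hence each vertex changes cluster $O(\log n)$ times with high probability, and we take a union bound over all vertices. Next we cut $P$ into subpaths of length $\Theta(1/\beta) \cdot$ (unit pieces, after subdividing edges into unit-weight segments). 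When a center $x$ is processed, the increment in the cost of its tentative cluster is charged to the path segments that currently form its boundary on $P$. A segment can be on a boundary only when one of its endpoints switches clusters, so the total number of threatenings over $P$ is $\sum_t X_t = O(w(P)\log n)$ with high probability.

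The per-step cost is bounded by $Y_t \le \min\{\epsilon D X_t, O(D)\}$. The conditional probability that step $t$ actually produces a final cut, given the past, is $O(\beta)$ per unit of threatened length. This uses memorylessness: the remaining shift excess of the claimed region is still exponential with rate $\beta$. Consequently $\E[Y_t \mid \mathcal{F}_{t-1}] \le \beta \epsilon D X_t = O(\epsilon\log n) X_t$. Summing gives a predictable compensator of $O(\epsilon \log n)\cdot O(w(P)\log n) = O(\epsilon\log^2 n)\, w(P)$. Each increment is at most $O(D)$, so a Freedman-type martingale inequality bounds the deviation by $O(\sqrt{D \cdot \text{compensator} \cdot \log n} + D\log n)$. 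By AM--GM this is at most $O(\epsilon\log^2 n)\,w(P) + O(D\log n)$. A union bound over all pairs $u,v$ completes the argument.

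The main obstacle is making the sequential MPX process rigorous. We need an ordering of revelations under which (i) the cost of each final cluster on $P$ is attributed to steps whose increments are bounded by $O(D)$ and are adapted to a filtration, and (ii) conditioned on the past, the probability that a threatened edge is truly cut is $O(\beta w(e))$. Clusters in MPX are not carved sequentially; a later center can steal vertices from an earlier tentative cluster. The plan is therefore to charge every tentative change, not only final ones, which is exactly where the extra $\log n$ factor in the bound comes from.
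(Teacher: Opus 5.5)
\textbf{Verdict: same architecture as the paper, but there is a genuine gap, in the step you yourself defer as ``the main obstacle'' — you never fix the sequential process the martingale runs on.}

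Your skeleton matches the paper's proof:
\begin{enumerate}
\item shortcutting via \Cref{clm:shortcutting};
\item a sequential MPX in which later centers may steal vertices;
\item the prefix-minimum bound giving $O(\log n)$ touches per vertex, hence $\sum_t X_t = O(\log n)\,w(P)$;
\item Freedman with increments at most $3D$ and predictable variation $O(\epsilon D\log^2 n)\,w(P)$, finished by AM--GM.
\end{enumerate}

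\textbf{The two processes you describe are not equivalent.} Processing centers in decreasing order of shift does not work. Conditioned on the shifts already revealed, the next shift is truncated to lie below the previous one. If you grow its ball outward from radius $0$, the conditional probability of stopping at a given unit step blows up near the truncation point, so it is not $O(\log n/D)$. The prefix-minimum count would survive in that order, but the hazard bound does not. What the martingale actually needs is this: the \emph{current tentative cluster is closed} at step $t$ with conditional probability exactly $p = O(\log n/D)$, independently of the already-determined $X_t$. Your condition (ii), that a threatened edge is ``truly cut'' with probability $O(\beta w(e))$, is the LDD separation property and plays no role here.

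\textbf{How the paper closes the gap: \textsc{MPX2}.}
\begin{itemize}
\item By \Cref{obs:floor-of-exp}, $\lfloor R_v\rfloor \sim \mathrm{Geom}(1-e^{-\lambda})$ with $\lambda = c_{\text{mpx}}\log n/D$. For an exponential, this is independent of the fractional part $\{R_v\}$.
\item Ordering centers by decreasing $\{R_v\}$ therefore gives a uniform permutation independent of the integer radii.
\item Centers are grown one at a time. Each step either increments the current integer radius or, on an independent $\mathrm{Bernoulli}(p)$ coin, closes the cluster.
\item $v_i$ takes $u$ iff $\dist(u,v_i)-\lfloor R_{v_i}\rfloor < \min_{j<i}\bigl(\dist(u,v_j)-\lfloor R_{v_j}\rfloor\bigr)$. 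Since distances are integral, this strict comparison breaks ties exactly as the fractional parts do, so the output coincides with MPX's.
\end{itemize}

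\textbf{Two smaller points your sketch also needs.}
\begin{itemize}
\item During one cluster's growth, an edge is threatened for at most $2w(e)$ steps, because the entry thresholds of its two endpoints differ by at most $2w(e)$. This is what turns ``$O(\log n)$ touches'' into $\sum_t X_t = O(\log n)\,w(P)$. ``An endpoint switches clusters'' alone does not bound the number of threatened steps.
\item Your first display charges each final cluster $\epsilon D\,|\delta(C)\cap P|$. That does not connect to $Y_t$, because later steals create boundary edges of $C$ on $P$ that were not threatened when $C$ was closed. Instead, assign each cut edge of $P$ to the \emph{last} cluster that cut it, which is the final cluster of one of its endpoints. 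Then the cluster closed at step $t$ is charged at most $\epsilon D X_t$.
\end{itemize}
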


\paragraph{Diameter Preserving Hierarchies.} We finally return to considering a hierarchy $\mathcal{C}_L, \dots, \mathcal{C}_0$ of low-diameter decompositions computed top-down using an LDD algorithm that satisfies the bound of \Cref{lem:mpx-distance-preserving-informal} and has slack $O(\log n)$, such as \textsc{GEOM} or \textsc{MPX}.\footnote{The stronger bound of \Cref{lem:geom-distance-preserving-informal} for GEOM would not lead to a substantially weaker requirement on $\epsilon$, as we need edge lengths in $\widetilde{G}$ to increase at most by a multiplicative constant in expectation compared to $G$, which already requires $\epsilon = O(1 / L \log(n))$.} We sketch a proof that such a hierarchy will with high probability satisfy $\diam(\widetilde{G}[C]) \leq 7 D_\ell$ for every $\ell \in [L]$ and $C \in \mathcal{C}_\ell$, for small enough $\epsilon = \Theta(1 / \log^2 n)$ and $\alpha = \Theta(\log n)$. The corresponding full proof is given in \Cref{sec:hierarchy-diameter-analysis}.

Fix $\mathcal{C}_L, \dots, \mathcal{C}_\ell$, and consider some cluster $C \in \mathcal{C}_\ell$, which has $\diam(G[C]) \leq D_\ell$. When $\mathcal{C}_{\ell - 1}$ is computed, denote by $\widetilde{G}_{\ell - 1}$ the graph with only weight increases from level $\ell - 1$. Since the LDD algorithm is distance preserving, we have for small enough $\epsilon = \Theta(1 / \log^2 n)$ and large enough $\alpha = \Theta(\log n)$ with high probability that
\begin{align*}
    \diam(\widetilde{G}_{\ell - 1}[C])
        &\leq (1 + O(\epsilon \log^2 n)) D_\ell + O(D_{\ell - 1} \log n)\\
        &\leq 3 D_\ell,
\end{align*}
thus the effect on diameter from the level immediately below $\ell$ is at most a constant factor.

Consider now two fixed vertices $u, v \in C$ and a shortest path $P$ in $\widetilde{G}_{\ell - 1}$ between them. Then,
\begin{itemize}
    \item For small enough $\epsilon = \Theta(1 / \log^2 n)$, the expected weight increase from the lower levels to $P$ is at most $w(P) \leq \widetilde{w}_{\ell - 1}(P) \leq 3D_\ell$, as there are at most $L = O(\log n)$ such levels, and the expected contribution of each is at most $O(\epsilon \log n) \cdot w(P)$, as the LDD has slack $O(\log n)$.
    \item The weight increases inside distinct clusters of $\mathcal{C}_{\ell - 1}$ are independent.
\end{itemize}
To obtain a concentration bound, we need an upper bound on the maximum increase caused by weight increases from any single cluster $C' \in \mathcal{C}_{\ell - 1}$. For the \emph{fixed} path $P$, this is impossible. However, we can once again apply shortcutting (\Cref{clm:shortcutting}) to bound the increase to the $s$-$t$ \emph{distance}. If $\diam(\widetilde{G}[C']) \leq 7 D_{\ell - 1}$ holds for every $C' \in \mathcal{C}_{\ell - 1}$ (which can be taken as the assumption of an induction on increasing $\ell$), then the maximum contribution of any single cluster is $7 D_{\ell - 1} = O(\frac{1}{\alpha} D_\ell)$. Since the contributions are independent, their expected sum is small, and any individual contribution is an arbitrarily small $\Theta(\frac{1}{\log n})$ fraction of the expectation for large enough $\alpha = \Theta(\log n)$, a standard Chernoff bound proves that with high probability, $\dist_{\widetilde{G}[C]}(u, v) \leq 7 D_\ell$. A union bound over pairs $u, v$ completes the induction step.

\section{Preliminaries} \label{Preliminaries}

\paragraph{Basic Notation.} We work with a connected, simple input graph $G=(V,E, w)$ with positive, polynomially bounded, integral edge weights $w : E \to [W]$. For a path $P$, we write $w(P) \coloneqq \sum_{e \in P} w(e)$.

We denote the ball of radius $R$ centered at $v$ in $G$ by
$B_{G}(v,R) = \{u \in V \mid \dist_{G}(u,v) \le R\}$.
The diameter of $G$ is defined as
\begin{equation*}
\diam(G) \coloneqq \max_{u, v \in V} \dist_{G}(u, v).    
\end{equation*}
Moreover, given a set $S \subseteq V$, we denote by $G[S]$ the induced subgraph of $G$ on $S$. We denote 
\begin{equation*}
\diam_{G}(S) \coloneqq \max_{u, v \in S} \dist_{G}(u, v).    
\end{equation*}
We call $\diam_{G}(S)$ the weak diameter of $S$ in $G$ and $\diam(G[S])$ the strong diameter of $S$. We denote the set of edges with one endpoint inside $S$ and the other inside $V\setminus S$ by $\delta(S) = \delta_G(S)$. For a clustering $\mathcal{C}$, we write $\delta(\mathcal{C}) \coloneqq \bigcup_{C \in \mathcal{C}} \delta(C)$.

\paragraph{The Geometric and Exponential Distributions.}
For a parameter $p \in (0,1)$, the \textit{geometric distribution} $X \sim \mathrm{Geom}(p)$ takes values in $\{0,1,2,\ldots\}$ with probabilities
\begin{equation*}
\mathbb{P}(X = k) = (1-p)^k p, \qquad k = 0,1,2,\ldots.
\end{equation*}
Equivalently, $X$ counts the number of failures before the first success in a sequence of independent Bernoulli trials with success probability $p$. For simplicity of notation, we also extend the definition of $\mathrm{Geom}$ to $p \geq 1$ with such $\mathrm{Geom}(p)$ being identically zero.

For a parameter $\lambda > 0$, the \textit{exponential distribution} $X \sim \mathrm{Exp}(\lambda)$ takes values in $\mathbb{R}_{\geq 0}$ with probability density function
\begin{equation*}
p_X(x) = \lambda e^{-\lambda x}, \quad x \geq 0
\end{equation*}
and cumulative distribution function
\begin{equation*}
F_X(x) = 1 - e^{-\lambda x}, \quad x \geq 0.
\end{equation*}

Both the geometric distribution and the exponential distribution have the \emph{memoryless property}: 
for $X \sim \mathrm{Geom}(p)$, for all integers $k, \ell \geq 0$,
\begin{equation*}
\mathbb{P}(X \geq k + \ell \mid X \geq \ell) = \mathbb{P}(X \geq k),
\end{equation*}
and for $X \sim \mathrm{Exp}(\lambda)$, for all $x, y > 0$,
\begin{equation*}
\mathbb{P}(X > x + y \mid X > y) = \mathbb{P}(X > x).    
\end{equation*}

The integer component of an exponential random variable is geometrically distributed. Specifically, the following holds:
\begin{observation}\label{obs:floor-of-exp}
    If $X \sim \mathrm{Exp}(\lambda)$, then $\lfloor X \rfloor \sim \mathrm{Geom}(p)$ for $p = 1 - e^{-\lambda}$.
\end{observation}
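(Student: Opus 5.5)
We verify the claim directly from the cumulative distribution function of $X \sim \mathrm{Exp}(\lambda)$, by computing the probability mass function of $\lfloor X \rfloor$ and matching it with the definition of $\mathrm{Geom}(p)$ given above. Since $X \geq 0$ almost surely, $\lfloor X \rfloor$ takes values in $\{0,1,2,\ldots\}$. For each integer $k \geq 0$, the event $\{\lfloor X \rfloor = k\}$ equals $\{k \le X < k+1\}$. The endpoint is irrelevant because $X$ has a density.

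Using $F_X(x) = 1 - e^{-\lambda x}$, we compute
\begin{equation*}
\mathbb{P}(\lfloor X \rfloor = k) = F_X(k+1) - F_X(k) = e^{-\lambda k} - e^{-\lambda(k+1)} = \left(e^{-\lambda}\right)^k \left(1 - e^{-\lambda}\right).
\end{equation*}
Setting $p = 1 - e^{-\lambda}$, which lies in $(0,1)$ since $\lambda > 0$, we have $e^{-\lambda} = 1-p$. The probability above therefore equals $(1-p)^k p$, which is exactly the probability mass function of $\mathrm{Geom}(p)$ from the preliminaries. Two random variables on $\{0,1,2,\ldots\}$ with the same mass function have the same distribution, which concludes the argument.

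There is no real obstacle here. The only points needing care are using $\lambda > 0$ to get $p \in (0,1)$, so that we land in the standard (non-degenerate) case of the definition of $\mathrm{Geom}$, and noting that the boundary events $\{X = k\}$ have probability zero. An alternative route would be via the memoryless property: $\mathbb{P}(\lfloor X\rfloor \ge k) = \mathbb{P}(X \ge k) = e^{-\lambda k} = (1-p)^k$ is the tail of $\mathrm{Geom}(p)$. The direct computation is already immediate, so we use that.
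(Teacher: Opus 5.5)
Your proposal is correct and matches the paper's proof: both compute $\mathbb{P}(\lfloor X\rfloor = k) = F_X(k+1) - F_X(k) = e^{-\lambda k}(1-e^{-\lambda}) = (1-p)^k p$ and read off the $\mathrm{Geom}(p)$ mass function. Your remarks that the boundary events have probability zero and that $p \in (0,1)$ because $\lambda > 0$ are small additions of care that the paper leaves implicit.
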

\begin{proof}
    $\lfloor X \rfloor = k$ occurs with probability
    \begin{equation*}
        \mathbb{P}(\lfloor X \rfloor = k) = F_X(k + 1) - F_X(k) = e^{-\lambda k} - e^{-\lambda (k + 1)} = e^{-\lambda k} (1 - e^{-\lambda}) = (1 - p)^k p.
    \end{equation*}
    This is the definition of a geometric random variable with parameter $p \coloneqq 1 - e^{-\lambda}$.
\end{proof}

\paragraph{Concentration Bounds.} We use the following  classic Chernoff bound. 

\begin{theorem}[Chernoff bound]\label{thm:chernoffBound}
Let $Y_1,\ldots,Y_n\in[0,H]$ be independent random variables, let
\begin{equation*}
Y = \sum_i Y_i,    \qquad \mu=\mathbb{E}[Y].
\end{equation*}
Then, for every $\delta \geq 0$,
\begin{equation*}
\mathbb{P}(Y \geq (1 + \delta)\mu) \leq \exp\left(-\frac{\delta^2 \mu}{(2 + \delta) H}\right).
\end{equation*}
\end{theorem}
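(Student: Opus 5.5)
We prove this by the standard moment-generating-function argument, after first normalizing the variables to lie in $[0,1]$. Set $Z_i \coloneqq Y_i/H \in [0,1]$, $Z \coloneqq \sum_i Z_i = Y/H$ and $\mu' \coloneqq \mathbb{E}[Z] = \mu/H$. The event $Y \ge (1+\delta)\mu$ is the same as $Z \ge (1+\delta)\mu'$. So it suffices to show
\begin{equation*}
\mathbb{P}(Z \ge (1+\delta)\mu') \le \exp\left(-\frac{\delta^2\mu'}{2+\delta}\right).
\end{equation*}
The degenerate cases $\delta = 0$ or $\mu = 0$ are immediate: the right-hand side is then $1$.

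First, bound the moment generating function of each summand. Fix $t > 0$. The map $z \mapsto e^{tz}$ is convex, so on $[0,1]$ it lies below its chord: $e^{tz} \le 1 + (e^t - 1)z$. Taking expectations and using $1+x \le e^x$ gives
\begin{equation*}
\mathbb{E}[e^{tZ_i}] \le \exp\bigl((e^t-1)\mathbb{E}[Z_i]\bigr).
\end{equation*}
By independence, $\mathbb{E}[e^{tZ}] \le \exp((e^t-1)\mu')$. Markov's inequality applied to $e^{tZ}$ then gives
\begin{equation*}
\mathbb{P}(Z \ge (1+\delta)\mu') \le \exp\bigl(\mu'(e^t - 1) - t(1+\delta)\mu'\bigr).
\end{equation*}
Choosing $t = \ln(1+\delta)$ turns this into the familiar bound
\begin{equation*}
\exp\bigl(-\mu'\,[(1+\delta)\ln(1+\delta) - \delta]\bigr).
\end{equation*}

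The only step that needs care is the elementary inequality converting this into the stated form, namely
\begin{equation*}
(1+\delta)\ln(1+\delta) - \delta \ge \frac{\delta^2}{2+\delta}.
\end{equation*}
We reduce it to $\ln(1+x) \ge \frac{2x}{2+x}$ for $x \ge 0$. Granting that bound,
\begin{equation*}
(1+\delta)\ln(1+\delta) - \delta \ge \frac{2\delta(1+\delta)}{2+\delta} - \delta = \frac{\delta^2}{2+\delta}.
\end{equation*}
To prove $\ln(1+x) \ge \frac{2x}{2+x}$, let $f(x) = \ln(1+x) - \frac{2x}{2+x}$. Then $f(0) = 0$ and
\begin{equation*}
f'(x) = \frac{1}{1+x} - \frac{4}{(2+x)^2} = \frac{x^2}{(1+x)(2+x)^2} \ge 0,
\end{equation*}
so $f \ge 0$ on $[0,\infty)$. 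Substituting $\mu' = \mu/H$ into the bound then yields exactly the claimed tail bound.
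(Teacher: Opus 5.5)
Your proof is correct and complete: it is the standard moment-generating-function argument. You rescale to $[0,1]$, bound $e^{tz}$ by its chord, apply independence and Markov with $t=\ln(1+\delta)$, and finish with the elementary inequality $\ln(1+x)\ge \frac{2x}{2+x}$, and each step checks out, including the degenerate cases. The paper itself gives no proof of this statement and simply quotes it as a classic Chernoff bound, so your argument supplies the textbook derivation that the citation relies on.
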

For martingales, we use the following bound, originally due to Freedman \cite{freedman1975tail}.
\begin{theorem}[Freedman's inequality {\cite[Thm. 1.1]{tropp11exactfreedman}}] \label{FreedmanInequality}
Let $\{M_t\}_{t \ge 0}$ be a real-valued martingale with $M_0 = 0$ and difference sequence $Y_t \coloneqq M_t - M_{t-1}$,
and suppose that $Y_t \le R$ almost surely for every $t \ge 1$. Define the predictable quadratic variation process of the martingale:
\begin{equation*}
W_t \coloneqq \sum_{j=1}^t \mathbb E_{j-1}[Y_j^2].    
\end{equation*}
Then, for every $k \ge 0$ and $\sigma^2 > 0$,
\begin{equation*}    
\mathbb{P}\left[\exists t \ge 0 \mid M_t \ge k\ \text{and}\ W_t \le \sigma^2\right]
\leq \exp\left(-\frac{k^2/2}{\sigma^2 + Rk/3}\right).
\end{equation*}
\end{theorem}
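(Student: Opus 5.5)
This is Freedman's inequality, which the paper cites rather than proves. I would reprove it with the standard exponential-supermartingale argument, then optimize a free parameter to reach a Bennett-type bound, and finally weaken that to the stated Bernstein form.

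\textbf{Step 1 (one-step moment bound).} Set $\phi(x) \coloneqq (e^x - x - 1)/x^2$, with $\phi(0)=1/2$. This function is nondecreasing on $\mathbb{R}$. Since $\lambda Y_j \le \lambda R$ almost surely, for $\lambda > 0$ we get
\begin{equation*}
e^{\lambda Y_j} \le 1 + \lambda Y_j + \phi(\lambda R)\lambda^2 Y_j^2 .
\end{equation*}
Define $g(\lambda) \coloneqq (e^{\lambda R} - \lambda R - 1)/R^2 = \phi(\lambda R)\lambda^2$. Take the conditional expectation $\mathbb{E}_{j-1}$ and use the martingale property $\mathbb{E}_{j-1}[Y_j] = 0$. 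Then apply $1+x \le e^x$:
\begin{equation*}
\mathbb{E}_{j-1}\left[e^{\lambda Y_j}\right] \le \exp\left(g(\lambda)\,\mathbb{E}_{j-1}[Y_j^2]\right).
\end{equation*}
Because $W_t$ is predictable, it follows that $S_t \coloneqq \exp(\lambda M_t - g(\lambda) W_t)$ is a positive supermartingale with $S_0 = 1$.

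\textbf{Step 2 (stopping).} Let $\tau$ be the first $t$ with $M_t \ge k$ and $W_t \le \sigma^2$, and $\tau = \infty$ if there is no such $t$. On $\{\tau < \infty\}$ we have $S_\tau \ge \exp(\lambda k - g(\lambda)\sigma^2)$. Optional stopping at the bounded time $\tau \wedge N$ gives $\mathbb{E}[S_{\tau\wedge N}] \le 1$. Markov's inequality then bounds $\mathbb{P}[\tau \le N]$, and letting $N \to \infty$ by monotone convergence yields
\begin{equation*}
\mathbb{P}[\tau < \infty] \le \exp\left(-\lambda k + g(\lambda)\sigma^2\right).
\end{equation*}
This holds for every $\lambda > 0$.

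\textbf{Step 3 (optimization).} Choose $\lambda = \frac{1}{R}\log\left(1 + Rk/\sigma^2\right)$. A direct computation turns the exponent into $-\frac{\sigma^2}{R^2} h\left(\frac{Rk}{\sigma^2}\right)$, where $h(u) = (1+u)\log(1+u) - u$. The elementary inequality
\begin{equation*}
h(u) \ge \frac{u^2}{2(1 + u/3)}
\end{equation*}
can be checked by comparing derivatives at $u=0$ twice. Applying it gives exactly $\exp\left(-\frac{k^2/2}{\sigma^2 + Rk/3}\right)$.

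\textbf{Main obstacle.} The only delicate points are the monotonicity of $\phi$, which is what allows one-sided boundedness $Y_t \le R$ rather than a two-sided bound, and the final calculus inequality for $h$. Both are routine. The edge cases $R \le 0$ and $k = 0$ are trivial or follow by taking limits.
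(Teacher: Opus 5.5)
Your argument is correct. Strictly speaking it takes a different route from the paper, but only because the paper gives no proof at all: it imports the statement as a black box from Tropp's Theorem~1.1 (originally Freedman 1975) and uses it only in the GEOM and MPX distance-preservation lemmas.

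What you wrote is the standard exponential-supermartingale proof that sits behind that citation, and each step holds up:
\begin{itemize}
\item Monotonicity of $\phi(x)=(e^x-x-1)/x^2$ is what lets a one-sided bound $Y_t\le R$ suffice. A one-line justification is $\phi(x)=\int_0^1(1-s)e^{sx}\,ds$.
\item The increments $W_t-W_{t-1}=\mathbb{E}_{t-1}[Y_t^2]$ are $\mathcal{F}_{t-1}$-measurable. This makes $S_t$ a positive supermartingale.
\item Optional stopping at $\tau\wedge N$, followed by $N\to\infty$, correctly handles the ``$\exists t$'' event.
\item The choice $\lambda=\frac1R\log(1+Rk/\sigma^2)$ does produce the exponent $-\frac{\sigma^2}{R^2}h(Rk/\sigma^2)$. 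So you actually obtain the sharper Bennett-type tail before relaxing it to the Bernstein form the paper states.
\item The degenerate cases are handled correctly. $R<0$ is impossible for a martingale difference, $R=0$ forces $M\equiv 0$, and $k=0$ makes the bound equal to $1$.
\end{itemize}

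The one phrase to tighten is the calculus step. Comparing derivatives only at $u=0$ is not enough. Set $f(u)=h(u)-\frac{3u^2}{2(u+3)}$ and note $f(0)=f'(0)=0$. Then check that $f''(u)=\frac{1}{1+u}-\frac{27}{(u+3)^3}\ge 0$ for all $u\ge 0$; this reduces to $u^3+9u^2\ge 0$.

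The trade-off is simple. Citing buys the paper brevity, while your version makes it self-contained at the cost of about half a page.
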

\section{Distance Preserving LDDs}

In this section, we prove concentration bounds on distances in a graph perturbed by a single clustering produced by either \textsc{GEOM} or \textsc{MPX}. We desire guarantees of the following kind:

\begin{definition}[distance preserving LDD algorithm]\label{def:distance-preserving}
    An LDD algorithm is \textit{$(\beta, \gamma)$-distance preserving}, if it has slack at most $\beta$, and for any diameter bound $D$ and fixed $\epsilon \in [0, 1]$, the random graph $\widetilde{G} = (V, E, \widetilde{w})$ constructed by computing $\mathcal{C} = \textsc{LDD}(G, D)$ and increasing the weight of edges cut by $\mathcal{C}$ by $\epsilon D$ satisfies with high probability that for every pair $u, v \in V$,
    \begin{equation*}
        \dist_{\widetilde{G}}(u, v) \leq (1 + \epsilon \beta) \dist_{G}(u, v) + \gamma D.
    \end{equation*}
\end{definition}

We will prove in \Cref{sec:geom-analysis} for $\textsc{GEOM}$ and \Cref{sec:mpx-analysis} for $\textsc{MPX}$ the following:

\begin{restatable}{lemma}{geomdistancepreserving}\label{lem:geom-distance-preserving}
    GEOM is $(\beta, \gamma)$-distance preserving for $\beta, \gamma = O(\log n)$.
\end{restatable}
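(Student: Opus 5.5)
We prove the slack and the distance bound separately. The slack bound $\beta = O(\log n)$ is the standard analysis of \textsc{GEOM}. We run the equivalent discrete-time process in which, at each unit time step, the active ball $B_t$ is carved with probability $p = c_{geom}\log n / D$, or else its radius grows by one. An edge $e$ is threatened, i.e.\ lies in $\delta_{G[U]}(B_t)$, for at most $w(e)$ consecutive steps before it is either cut or swallowed. Hence $\mathbb{P}[e \text{ cut}] \le p\,w(e)$. For the strong diameter, note that $\mathbb{P}[R > D/2] \le (1-p)^{D/2} \le n^{-c_{geom}/2}$. There are at most $n$ balls, so a union bound gives diameter $\le D$ with high probability.

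For the distance bound, we fix $u,v$ and a shortest $u$--$v$ path $P$ in $G$, and charge weight increases step by step. Let $X_t$ be the number of edges of $P$ threatened at step $t$. By the consecutivity argument, $\sum_t X_t \le |P| \le w(P)$ deterministically, since weights are positive integers. When the ball is carved at step $t$, the edges of $P$ it cuts are exactly the $X_t$ threatened ones. We assign these, together with the edges of $P$ inside the new cluster, to that cluster's part in \Cref{clm:shortcutting}.

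We then apply \Cref{clm:shortcutting} with $\Delta = \epsilon D$ on cut edges, so $M = \epsilon D \le D$. The cluster diameter is measured in $\widetilde{G}$, but $\diam(\widetilde G[C]) \le \diam(G[C]) + \epsilon D\cdot(\#\text{cut edges inside})$. Since $\widetilde G[C]$ contains no cut edges, $\diam(\widetilde{G}[C]) = \diam(G[C]) \le D$. This gives
\begin{equation*}
\dist_{\widetilde G}(u,v) \le w(P) + \sum_t \mathbf{1}[\text{carve at } t]\cdot Z_t, \qquad Z_t \coloneqq \min\{\epsilon D X_t, 3D\}.
\end{equation*}
Here $Z_t$ is determined before the coin at step $t$ is flipped.

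Now let $M_t = \sum_{j\le t}(\mathbf{1}[\text{carve at } j] - p)Z_j$. This is a martingale whose increments are bounded by $R = 3D$. Its predictable quadratic variation satisfies
\begin{equation*}
W_t \le p\sum_j Z_j^2 \le 3Dp\sum_j Z_j \le 3Dp\,\epsilon D\,w(P).
\end{equation*}
The compensator satisfies $p\sum_j Z_j \le p\,\epsilon D\,w(P) = c_{geom}\epsilon\log n\cdot w(P)$. We apply Freedman (\Cref{FreedmanInequality}) with $\sigma^2 = 3Dp\,\epsilon D\,w(P)$ and deviation $k = \epsilon\log n\cdot w(P) + C D\log n$. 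Then $k^2 \ge 2\,\epsilon\log n\, w(P)\cdot CD\log n$, which dominates $\sigma^2\cdot\Theta(\log n)$ for large $C$; similarly $k^2 \ge k\cdot CD\log n$ dominates $Rk\log n$. So the failure probability is $n^{-\Omega(C)}$. On the complement, the total increase is at most $O(\epsilon\log n)\,w(P) + O(D\log n)$. A union bound over all $n^2$ pairs then yields $(\beta,\gamma) = (O(\log n), O(\log n))$.

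The main obstacle is the careful setup of the filtration. The process terminates at a random time, so we pad it with zero increments. The shortcut claim also requires the partition of $P$ to be fixed in terms of the final clustering, so we must check that charging each threatened edge to the ball that cuts it gives a valid partition. The \emph{other} edges that the claim charges to a cluster lie inside it and incur no $\Delta$, so the minimum with $D+2M\le 3D$ is exactly what the per-step cap $Z_t$ encodes. The remaining ingredient is \Cref{clm:shortcutting} itself, which we use as given.
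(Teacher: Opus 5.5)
Your proposal is correct and follows essentially the paper's route: the step-by-step \textsc{GEOM2} process, a per-carve cost capped at $\min\{\epsilon D X_t, 3D\}$ via \Cref{clm:shortcutting}, Freedman's inequality with $R = 3D$ and $\sigma^2 = O(\epsilon p D^2 w(P))$, and a union bound over pairs. One harmless slip: an edge can be threatened for up to $w(e)$ consecutive steps, so the intermediate claim $\sum_t X_t \le |P|$ is false in general, but the bound you actually use, $\sum_t X_t \le w(P)$, follows directly from your own consecutivity argument.
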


\begin{restatable}{lemma}{mpxdistancepreserving}\label{lem:mpx-distance-preserving}
    MPX is $(\beta, \gamma)$-distance preserving for $\beta = O(\log^2 n)$ and $\gamma = O(\log n)$.
\end{restatable}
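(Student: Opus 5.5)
The statement to prove is \Cref{lem:mpx-distance-preserving}: MPX is $(O(\log^2 n), O(\log n))$-distance preserving. The slack bound $\beta \ge O(\log n)$ is the standard MPX guarantee, so the work is the distance bound. I follow the outline sketched for \textsc{GEOM} in the technical overview.

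\textbf{Setup.} Recall that MPX draws shifts $\delta_x \sim \mathrm{Exp}(\lambda)$ with $\lambda = \Theta(\log n / D)$ and assigns each $v$ to the $x$ maximizing $\delta_x - \dist_G(x,v)$. Fix $u,v$ and a shortest path $P$ in $G$. It suffices to prove the bound for this fixed pair with failure probability $n^{-c}$ and then take a union bound over all pairs.

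\textbf{Sequential process.} I replace MPX by an equivalent process that grows clusters one at a time, so that the shortcutting of \Cref{clm:shortcutting} can be paid cluster by cluster.
\begin{itemize}
\item Process centers in decreasing order of shift, equivalently by a continuous time $\tau$ running downward from $\max_x \delta_x$.
\item Center $x$ starts a ball at time $\delta_x$. At time $\tau$ the ball has radius $\delta_x - \tau$.
\item A vertex's tentative owner is the center currently reaching it with maximal $\delta_x - \dist(x,\cdot)$. It can change as later centers arrive.
\end{itemize}
The standard MPX fact is that a fixed vertex has only $O(\log n)$ centers within $O(1)$ shift-gap of its winner, w.h.p. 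More precisely, the number of times the tentative owner of a vertex changes, when centers are revealed in a random or shift order, is $O(\log n)$ w.h.p. I would prove this by a backwards, record-value argument: the $k$-th revealed center becomes owner with probability about $1/k$. Harmonic sums together with a Chernoff bound on the independent indicators give the claim.

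\textbf{Per-step costs.} Call $e \in P$ threatened by center $x$ at a step if exactly one endpoint is currently owned by $x$. Each "ownership episode" of an endpoint threatens $e$ for at most $w(e)$ units of growth, since the ball either covers the edge or is overtaken. So $e$ is threatened in total at most $O(w(e)\log n)$ time w.h.p., and the sum over $P$ is $O(w(P)\log n)$.

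Next I discretize time into unit steps. The memoryless property of $\mathrm{Exp}$ (via \Cref{obs:floor-of-exp}) shows that at each step, conditioned on the past, the current frontier finalizes with probability $p = O(\log n/D)$. When it finalizes, it charges
\[
Y_t = \min\{\epsilon D X_t,\; D + 2\epsilon D\},
\]
where $X_t$ is the number of threatened edges of $P$. This charge is justified by \Cref{clm:shortcutting}, using strong diameter $\le D$ w.h.p. and $M = \epsilon D$.

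Let $M_t = \sum_{s \le t} (Y_s - \mathbb{E}_{s-1} Y_s)$. This is a martingale with increments at most $3D$. The compensator is at most $p \epsilon D \sum_t X_t = O(\epsilon \log^2 n)\, w(P)$, and the variance proxy is at most $3D$ times that compensator. Freedman's inequality (\Cref{FreedmanInequality}) then gives, w.h.p.,
\[
\sum_t Y_t \le O(\epsilon \log^2 n)\, w(P) + O(D \log n).
\]
This is exactly the required distance bound.

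\textbf{Main obstacle.} The delicate point is making "threatened time is $O(w(e)\log n)$" hold simultaneously with the martingale structure. The bound must be enforced as a stopping condition: stop the martingale once the cumulative threatened count exceeds $C\, w(P)\log n$, and absorb that failure event via the ownership-change lemma.

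A second subtlety is that a vertex's cluster is not final while later centers can still steal it. Costs should therefore be charged only to final clusters. I would handle this by charging at the time a boundary edge becomes permanently cut, which is bounded by the total threatened time. If this becomes messy, I would instead order centers by shift and argue that the ownership of a vertex is final once $\tau$ drops below its current best value minus $0$, i.e. once no later center can beat it.
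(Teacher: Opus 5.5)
Your outline has the right architecture: sequential growth, $O(\log n)$ reassignments per vertex via a record argument, a per-finalization charge $\min(\epsilon D X_t, 3D)$ justified by \Cref{clm:shortcutting}, and Freedman's inequality. The gap is the sequential process you actually specify. It does not support the step everything rests on, namely that at each unit step, conditioned on the past, the current cluster stops with probability exactly $p$.

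With time $\tau$ running downward and the ball of $x$ having radius $\delta_x-\tau$, all balls grow simultaneously. A vertex $v$ is first reached at $\tau=\max_x(\delta_x-\dist_G(x,v))$ by the maximizing center, so ownership never changes, contrary to what you state. Moreover, no cluster ever stops at a random step. Boundaries arise where balls meet, and that is driven by other centers' start times, not by a fresh coin.

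If instead you grow one center at a time in decreasing order of shift, the order depends on the very variables you want to reveal unit by unit. Conditioned on the past, $\lfloor\delta_x\rfloor$ is no longer a fresh $\mathrm{Geom}(p)$, since it is truncated at the previously processed shift. Its per-step stopping probability therefore exceeds $p$ and tends to $1$ near the truncation point. Either way, the compensator bound $\sum_t\mathbb{E}_{t-1}[Y_t]\le p\,\epsilon D\sum_t X_t$ and the variance bound you feed into Freedman are unjustified.

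The missing idea is to make the processing order independent of the integer parts of the radii. The paper grows clusters one at a time in a uniformly random order $v_1,\dots,v_n$. It reveals each $R_{v_i}$ one unit per step with stopping probability $p=1-e^{-\lambda}$. The $i$-th cluster takes $\{u:\dist(u,v_i)-R_{v_i}<\min_{j<i}\dist(u,v_j)-R_{v_j}\}$ away from earlier clusters.

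This process has the same output law as MPX. By \Cref{obs:floor-of-exp} the integer parts are $\mathrm{Geom}(p)$. Ordering by decreasing fractional part, which for an exponential is independent of the integer part, yields a uniform permutation and reproduces MPX's tie-breaking. Now the Bernoulli($p$) step is exact. Since the permutation is independent of the integer radii, the number of times a vertex is touched is exactly the number of strict prefix minima of $\dist(u,v_i)-R_{v_i}$, which is $O(\log n)$ with high probability.

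For the charging, you do not need to detect when an edge becomes permanently cut, which is not a predictable event. Instead, assign each finally cut edge of $P$ to the later-created of its endpoints' final clusters. The edge was threatened at that cluster's creation step $t$, so that cluster costs at most $Y_t$.

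Finally, an episode can threaten $e$ for up to $2w(e)$ steps rather than $w(e)$. This is because the threshold $\min_{j<i}(\dist(\cdot,v_j)-R_{v_j})$ itself varies by up to $w(e)$ across the edge.
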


\subsection{GEOM is Distance Preserving}\label{sec:geom-analysis}

The geometric ball carving algorithm (\Cref{alg:geom}) and the approach we will use to show it is distance preserving have already been described in the technical overview. The equivalent variant of \textsc{GEOM} we analyze is given in \Cref{alg:geom2}.

\begin{algorithm}[h]
\caption{\textsc{GEOM2}($G$, $D$)}
\label{alg:geom2}
$p \gets \min\left(1, \frac{c_{\text{geom}} \log n}{D}\right)$\\
$U \gets V$\\
$\mathcal{C} \gets \emptyset$\\
$x \gets$ an arbitrary element of $U$\\
$R \gets 0$\\
\For{$t = 1, 2, 3, \dots$}{
    Sample $\cut \sim \mathrm{Bernoulli}(p)$\\
    \If{not $\cut$}{
        $R \gets R + 1$\\
    }\Else{
        $C \gets B_{G[U]}(x,R)$\\
        $U \gets U \setminus C$\\
        $\mathcal{C} \gets \mathcal{C} \cup \{C\}$\\
        \If{$U = \emptyset$}{
            \Return $\mathcal{C}$
        }\Else{
            $x \gets$ an arbitrary element of $U$\\
            $R \gets 0$\\
        }
    }
}
\end{algorithm}

We will need the following standard lemma, which we prove in \Cref{sec:deferred} for completeness.
\begin{restatable}[\cite{bartal1996probabilistic}]{lemma}{geom}
\label{lem:geom}
GEOM is an LDD algorithm with slack $s \coloneqq c_{geom} \log n$.
\end{restatable}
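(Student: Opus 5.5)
We prove \Cref{lem:geom} by analyzing the equivalent process \textsc{GEOM2} (\Cref{alg:geom2}). Since each ball radius in \textsc{GEOM} is $\mathrm{Geom}(p)$ and the coin flips in \textsc{GEOM2} are i.i.d.\ $\mathrm{Bernoulli}(p)$, the radius reached when a ball is carved is again $\mathrm{Geom}(p)$. Hence both algorithms produce the same distribution of clusterings. We must establish two properties: the diameter bound with high probability, and the separation probability $s \cdot \dist_G(u,v)/D$ with $s = c_{geom}\log n$.

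\textbf{Diameter.} Every cluster is a ball $B_{G[U]}(x,R)$ in the current remaining graph, so its strong diameter is at most $2R$. This holds because any vertex of the ball is joined to $x$ by a shortest path in $G[U]$ of length at most $R$, and every vertex on that path is also within distance $R$ of $x$, so the path lies inside the cluster. There are at most $n$ balls. Each has $\mathbb{P}(R > D/2) \le (1-p)^{D/2} \le e^{-c_{geom}\log n/2} = n^{-c_{geom}/2}$. A union bound then gives diameter at most $D$ with high probability, provided $c_{geom}$ is a large enough constant. If $p=1$ every radius is $0$, which is trivial. (If one prefers the statement for the $D$ in the definition, choose $c_{geom}$ accordingly; constants are absorbed.)

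\textbf{Separation probability.} Fix $u,v$. By the triangle inequality along a shortest $u$--$v$ path, it suffices to show that each edge $e=(a,b)$ is cut with probability at most $p\, w(e)$, and then sum over the edges of the path. We use the threatening argument. Call $e$ threatened at time $t$ if exactly one endpoint lies in the active ball $B_{G[U]}(x,R)$ with both endpoints still in $U$. Once $e$ becomes threatened, say $a$ is inside at radius $R_0 \ge \dist_{G[U]}(x,a)$, the distance $\dist_{G[U]}(x,b) \le \dist_{G[U]}(x,a) + w(e)$. So after at most $w(e)$ non-cut increments, either $b$ joins the ball and $e$ is safe forever, or a cut occurs during this window. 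The edge can be cut only in a step at which it is threatened, and it is threatened at no other times: before threatening, neither endpoint is in any carved ball, and after it is resolved it is never threatened again. Since the coin flips are fresh and independent of the past, the probability that any of these at most $w(e)$ consecutive threatened steps results in a cut is at most $1-(1-p)^{w(e)} \le p\, w(e)$.

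The main subtlety is this last step, specifically that the threatened window is a single consecutive block of at most $w(e)$ steps. We must check that an edge with one endpoint in the ball remains threatened only while the radius increases: the radius is integral and distances are integral, so at most $w(e)$ increments are needed. We must also check that when the ball is carved, $e$ is either cut or both endpoints leave together, so the edge cannot be threatened again by a later ball. With this, the per-edge bound $p\,w(e) = c_{geom}\log n \cdot w(e)/D$ follows, and summing over a shortest path gives slack $s = c_{geom}\log n$.
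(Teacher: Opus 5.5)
Your proof is correct and takes essentially the same route as the paper. For the diameter, both use a geometric tail bound plus a union bound over at most $n$ balls. For separation, both observe that the first ball reaching an endpoint of $e$ either cuts it, with probability at most $1-(1-p)^{R_1-R_0} \le p\,w(e)$, or absorbs it; you phrase the paper's memorylessness step through the fresh coin flips of \textsc{GEOM2}. You also make explicit two steps the paper leaves implicit: a ball of radius $R$ in $G[U]$ has strong diameter at most $2R$, and the per-edge bound extends to arbitrary pairs by a union bound over a shortest path.
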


When analyzing \Cref{alg:geom2}, we refer to iterations of the main for-loop as time steps, denote by $\cut_t$ the value of the cut parameter sampled at time $t$, by $x_t$ and $R_t$ the values of $x$ and $R$ at the start of the $t$\textsuperscript{th} time step, and by $B_t \coloneqq B_{G[U]}(x_t, R_t)$ the \textit{active ball} at time $t$. We say the active ball \textit{threatens} an edge $e$ if $e \in \delta_{G[U]}(B_t)$.



\geomdistancepreserving*

\begin{proof}
    The case $p = 1$ is trivial. Assume $p < 1$, fix two vertices $u, v$, and let $P$ be a shortest $(u, v)$-path in $G$.

    Let $X_t$ be a random variable denoting the number of edges of $P$ the active ball at time $t$ threatens. Since each edge $e$ is threatened during at most $w(e)$ time steps, we have $\sum X_t \leq w(P)$.
    
    Let $Y_t \coloneqq \mathbb{I}[\text{cut}_t] \cdot \min(\epsilon D \cdot X_t, 3D)$, where $\text{cut}_t$ is the event that the active ball stops growing at time $t$, thus becoming one of the produced clusters.

    Conditioning on every produced cluster having diameter at most $D$, an application of \Cref{clm:shortcutting} shows that $\dist_{\widetilde G}(u,v) \le w(P)+\sum_t Y_t$. Indeed, we can assign each perturbed edge of $P$ to the first cluster that cut it, and each unperturbed edge to the cluster it is contained in. For a cluster produced at time $t$, the assigned edges have a total weight increase of $\epsilon D \cdot X_t$, and each edge's weight increases by at most $\epsilon D \le D$, thus shortcutting through a cluster costs $D + 2D = 3D$.

    We can bound the expectation of the sum $\sum_t Y_t$ by bounding each minimum term by its first argument. To obtain a concentration bound, we will use Freedman's inequality (\Cref{FreedmanInequality}) on the martingale $M_t = M_{t - 1} + Z_t$, where $M_0 = 0$ and $Z_t$ is the centered version of $Y_t$, specifically
    \begin{equation*}
        Z_t \coloneqq Y_t - \mathbb{E}[Y_t \mid X_t] = \min\left(\epsilon D \cdot X_t, 3D\right) \cdot \begin{cases}
            (1 - p) &\text{if $\cut_t$}\\
            -p &\text{otherwise}
        \end{cases}.
    \end{equation*}
    For variance, a direct calculation gives
    \begin{align*}
        \mathbb{E}[Z_t^2 \mid X_t]   &= \min\left(\epsilon D \cdot X_t, 3D\right)^2 \cdot (p (1 - p)^2 + p^2 (1 - p))\\
            &\leq \epsilon D \cdot X_t \cdot 3D \cdot p (1 - p)\\
            &\leq 3 \epsilon Ds \cdot X_t.
    \end{align*}
    where $s$ is the slack of \textsc{GEOM}. Since the distribution of $Z_t$ is determined by $X_t$, we thus have
    \begin{align*}
        W_t &= \sum\nolimits_{j = 1}^{t} \mathbb{E}[Z_j^2 \mid X_j]\\
            &\leq 3 \epsilon Ds \sum\nolimits_{j = 1}^{t} X_j\\
            &\leq 3 \epsilon Ds \cdot w(P).
    \end{align*}
    Let $R \coloneqq 3 D$ and $\sigma^2 \coloneqq 3 \epsilon Ds \cdot w(P)$. As $Z_t \leq R$ and $W_t \leq \sigma^2$ holds for every $t$, by Freedman,
    \begin{align*}
        \mathbb{P}\left(\sum\nolimits_t Z_t \geq k\right)    &\leq \mathbb{P}(\exists t : M_t \geq k)\\
            &= \mathbb{P}(\exists t : M_t \geq k \text{ and } W_t \leq \sigma^2)\\
            &\leq \exp\left(\frac{-k^2 / 2}{\sigma^2 + Rk / 3}\right).
    \end{align*}
    Plugging in $k = 3 \sigma^2 / R + \gamma D$ for $\gamma = \Theta(\log n)$ to be determined later, the larger term in the divisor is $Rk / 3 = Dk$, and we obtain
    \begin{equation*}
        \mathbb{P}\left(\sum\nolimits_t Z_t \geq k\right)
            \leq \exp\left(\frac{-k^2 / 2}{2 Dk}\right)
            = \exp\left(\frac{-k}{4D}\right)
            \leq \exp(-\gamma / 4).
    \end{equation*}
    For large enough $\gamma = \Theta(\log n)$, we get that $\sum_t Z_t \leq k$ with high probability. Finally, we have
    \begin{equation*}
        \sum_t \mathbb{E}[Y_t \mid X_t] \leq \sum_t \mathbb{E}[\mathbb{I}[\cut_t] \cdot \epsilon D \cdot X_t \mid X_t] = \epsilon p D \cdot \sum_t X_t \leq \epsilon s \cdot w(P).
    \end{equation*}
    Thus, conditioning on both $\sum_t Z_t \leq k$ and each cluster having diameter at most $D$, which are both individually high probability events, we obtain that with high probability,
    \begin{align*}
        \dist_{\widetilde{G}}(u, v)
            &\leq w(P) + \sum\nolimits_t Y_t\\
            &= w(P) + \sum\nolimits_t Z_t + \sum\nolimits_t \mathbb{E}[Y_t \mid X_t]\\
            &\leq w(P) + k + \epsilon s \cdot w(P)\\
            &= (1 + \epsilon \beta) \cdot w(P) + \gamma D\\
            &= (1 + \epsilon \beta) \dist_G(u, v) + \gamma D
    \end{align*}
    for some $\beta = O(s) = O(\log n)$. Union bounding over all $\Theta(n^2)$ pairs $u, v$ gives the desired result.
\end{proof}

\subsection{MPX is Distance Preserving}\label{sec:mpx-analysis}

\begin{algorithm}[h]
\caption{\textsc{MPX}($G$, $D$)}
\label{alg:MPX}
\For{$v \in V$}{
    Sample $R_v \sim \mathrm{Exp}\left(\frac{c_{\text{mpx}} \log n}{D}\right)$
}
\For{$u \in V$}{
    $r_u \gets \arg\min_{v \in V} \dist_G(u, v) - R_v$
}
$\mathcal{C} \gets \emptyset$\\
\For{$v \in V$}{
    \If{$r_v = v$}{
        $\mathcal{C} \gets \mathcal{C} \cup \{\{u \in V \mid r_u = v\}\}$
    }
}
\Return $\mathcal{C}$
\end{algorithm}

The MPX low diameter decomposition \cite{miller2013mpxoriginal}, given in \Cref{alg:MPX}, samples for each vertex an exponentially distributed ``head start'' $R_v$, then assigns each vertex $u$ to the cluster with center $v$ of minimum $\dist(u, v) - R_v$. 

\begin{restatable}[\cite{miller2013mpxoriginal}]{lemma}{mpxldd}\label{lem:mpx-ldd}
    MPX is an LDD algorithm with slack $s \coloneqq 2c_{\text{mpx}} \log n$.
\end{restatable}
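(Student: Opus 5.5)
The plan is to verify the two properties of \Cref{def:ldd} directly, with rate $\lambda \coloneqq c_{\text{mpx}} \log n / D$. The diameter bound will come from a high-probability bound on all head starts. The separation bound will come from the memorylessness of the exponential distribution, in the same spirit as the gap argument sketched for Objective~1.

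\textbf{Diameter.} First, $\mathbb{P}(R_v > D/2) = e^{-\lambda D/2} = n^{-c_{\text{mpx}}/2}$. A union bound over $v$ then shows that, for $c_{\text{mpx}}$ large enough, with high probability $R_v \le D/2$ for all $v$.

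Fix $u$ and its center $r_u$. Since $u$ itself is a candidate center, $\dist_G(u,r_u) - R_{r_u} \le -R_u \le 0$, so $\dist_G(u,r_u) \le R_{r_u} \le D/2$.

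To get \emph{strong} diameter, I will show that every vertex $w$ on a shortest $u$--$r_u$ path also has $r_w = r_u$. Suppose some $y$ had $\dist(w,y) - R_y < \dist(w,r_u) - R_{r_u}$. Adding $\dist(u,w)$ to both sides and using $\dist(u,w)+\dist(w,r_u) = \dist(u,r_u)$ together with the triangle inequality gives $\dist(u,y) - R_y < \dist(u,r_u) - R_{r_u}$. This contradicts the choice of $r_u$. Ties occur with probability zero, since the $R_v$ are continuous; alternatively one can fix a consistent tie-breaking order.

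Hence each cluster contains a shortest-path tree from its center of radius at most $D/2$ inside $G[C]$. Therefore $\diam(G[C]) \le D$.

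\textbf{Separation.} Fix $u,v$ with $d = \dist_G(u,v)$ and write $\delta_x \coloneqq R_x - \dist_G(u,x)$. If the largest value $\delta_x$ exceeds every other $\delta_y$ by more than $2d$, then $u$ and $v$ get the same center. The reason is that for $v$, $R_x - \dist(v,x) \ge \delta_x - d > \delta_y + d \ge R_y - \dist(v,y)$. So it suffices to bound the probability that the gap between the top two values of $\delta$ is at most $2d$.

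For a fixed $x$, condition on all $R_y$ with $y \neq x$, and let $T \coloneqq \max_{y\ne x} \delta_y$. The event ``$x$ is the argmax and the gap is $<2d$'' is then $\{T < \delta_x < T+2d\}$. Writing $a = T + \dist(u,x)$, memorylessness gives
\[
\mathbb{P}(R_x < a + 2d \mid R_x > a) = 1 - e^{-2\lambda d}
\]
when $a \ge 0$. When $a < 0$, the conditional probability is $\mathbb{P}(R_x < a+2d) \le 1-e^{-2\lambda d}$. Hence the event ``$x$ is the argmax with gap $<2d$'' has probability at most $(1-e^{-2\lambda d})\,\mathbb{P}(x \text{ is the argmax})$.

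Summing over $x$ gives a separation probability of at most $1 - e^{-2\lambda d} \le 2\lambda d = 2c_{\text{mpx}}\log n \cdot d/D$, which is slack $s = 2c_{\text{mpx}}\log n$.

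The main obstacle is this conditioning step. One has to phrase the event per fixed index $x$ so that memorylessness applies cleanly, and handle the boundary case $a<0$. The diameter part is routine.
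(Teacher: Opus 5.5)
Your proof is correct. The paper gives no proof of its own and simply attributes this lemma to \cite{miller2013mpxoriginal}, and your argument is essentially the standard one from that work: strong diameter follows from each cluster being closed under shortest paths to its center together with $\max_v R_v \le D/2$ w.h.p., and separation follows from a memorylessness bound on the gap between the two largest values of $R_x - \dist_G(u,x)$, which recovers exactly the stated slack via $1 - e^{-2\lambda d} \le 2\lambda d = 2c_{\text{mpx}}\log n \cdot d/D$.
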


To show that MPX is distance-preserving, we analyze a variant that grows the clusters one-by-one and has the same distribution of produced clusterings $\mathcal{C}$, given in \Cref{alg:MPX2}. 

\begin{algorithm}[h]
\caption{\textsc{MPX2}($G$, $D$)}
\label{alg:MPX2}
$p \gets 1 - \exp(\frac{- c_{\text{mpx}} \log n}{D})$\\
$R_v \gets 0$ for every $v \in V$\\
$C_v \gets \emptyset$ for every $v \in V$\\
$v_1, v_2, \dots, v_n \gets$ a random permutation of $V$\\
$i \gets 1$\\
\For{$t = 1, 2, 3, \dots$}{
    Sample $\cut \sim \mathrm{Bernoulli}(p)$\\
    \If{not $\cut$}{
        $R_{v_i} \gets R_{v_i} + 1$\\
    }\Else{
        $C_{v_i} \gets \{u \in V \mid \dist(u, v_i) - R_{v_i} < \min_{j < i} \dist(u, v_j) - R_{v_j}\}$\\
        \For{$j < i$}{
            $C_{v_j} \gets C_{v_j} \setminus C_{v_i}$\\
        }
        \If{$i = n$}{
            $\mathcal{C} \gets \{C_v \mid v \in V, C_v \neq \emptyset\}$\\
            \Return $\mathcal{C}$\\
        }\Else{
            $i \gets i + 1$\\
        }
    }
}
\end{algorithm}

\begin{lemma}
    The randomness of \Cref{alg:MPX} can be coupled with the randomness of \Cref{alg:MPX2} so that the produced clusterings are almost surely equal. 
\end{lemma}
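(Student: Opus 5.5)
The plan is to couple through the integer and fractional parts of the exponential head starts, using \Cref{obs:floor-of-exp}. Let $\lambda \coloneqq \frac{c_{\text{mpx}}\log n}{D}$, so that $p = 1-e^{-\lambda}$ is exactly the parameter used in \Cref{alg:MPX2}. In \Cref{alg:MPX} write each head start as $R_v = N_v + F_v$ with $N_v \coloneqq \lfloor R_v\rfloor$ and $F_v \coloneqq R_v - N_v \in [0,1)$.

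\textbf{Distribution of the pieces.} By \Cref{obs:floor-of-exp}, $N_v \sim \mathrm{Geom}(p)$. A direct computation from the density shows that $N_v$ and $F_v$ are independent: $\mathbb{P}(N_v = k, F_v \le f) = e^{-\lambda k}(1-e^{-\lambda f})$ factorizes. Moreover $F_v$ has a continuous distribution. Hence the family $\{N_v\}_v \cup \{F_v\}_v$ is mutually independent. Almost surely the $F_v$ are pairwise distinct, and almost surely no ties occur in the $\arg\min$ of \Cref{alg:MPX}, since all $\dist_G(u,v)$ are integers while the $F_v$ are distinct and continuous. Order $V$ by increasing $F_v$ to get $v_1,\dots,v_n$. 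Because the $F_v$ are i.i.d.\ continuous, this permutation is uniform, and it is independent of $\{N_v\}$.

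\textbf{The coupling.} In \Cref{alg:MPX2}, the Bernoulli trials spent on the $i$-th vertex produce a final radius $R_{v_i}\sim\mathrm{Geom}(p)$ (failures before the first success). Distinct $i$ use disjoint blocks of trials, so these radii are i.i.d.\ and independent of the permutation. We therefore couple the two algorithms by feeding \Cref{alg:MPX2} the permutation $v_1,\dots,v_n$ above and choosing its trials so that the final radius of $v_i$ equals $N_{v_i}$. This is a valid coupling because the joint law of (permutation, radii) matches on both sides.

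\textbf{Equality of the clusterings.} Since distances are integral, for $v \ne w$ we have $\dist(u,v)-R_v < \dist(u,w)-R_w$ if and only if either $\dist(u,v)-N_v < \dist(u,w)-N_w$, or these two integers are equal and $F_v > F_w$. In words, $r_u$ minimizes the integer key $\dist(u,\cdot)-N_\cdot$, with ties broken toward the larger permutation index.

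Now look at \Cref{alg:MPX2}. When $v_i$ is processed, $u$ moves into $C_{v_i}$ exactly when $v_i$'s key is strictly smaller than the keys of all $v_j$ with $j<i$. The final cluster of $u$ is therefore the last $v_i$ that set a strict new minimum. That is the minimizer of the key with ties going to the larger index, which by the previous paragraph is $r_u$.

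This shows that the label of every vertex agrees in both algorithms. \Cref{alg:MPX} keeps only the classes $\{u : r_u = v\}$ with $r_v = v$. I still need to check that every nonempty class has this form: if $r_u = v$ then $r_v = v$. Suppose instead that some $w$ beats $v$ at $v$. Then by the triangle inequality $\dist(u,w) - R_w \le \dist(u,v) + \dist(v,w) - R_w < \dist(u,v) + \dist(v,v) - R_v$, so $w$ would also beat $v$ at $u$, a contradiction. Hence both algorithms output the partition of $V$ into the nonempty label classes.

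\textbf{Main obstacle.} The step needing the most care is the tie-breaking. One must verify that the strict inequality in \Cref{alg:MPX2}, together with the overwriting of earlier clusters, implements exactly the lexicographic comparison given by the fractional parts. One must also justify the independence of floor and fractional part, which is what makes the permutation uniform and independent of the radii.
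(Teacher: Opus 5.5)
There is a genuine error in your tie-breaking step, and it breaks the coupling as you set it up. You correctly say that in \Cref{alg:MPX2} the final cluster of $u$ is the last $v_i$ that sets a strict new prefix minimum of the integer key $\dist(u,v_i)-N_{v_i}$. The next sentence is false, however. The last strict prefix minimum is the \emph{first} index at which the global minimum value is attained. Any later $v_j$ with the same key fails the strict inequality and does not take $u$ over. So \Cref{alg:MPX2} breaks ties toward the \emph{smaller} permutation index, while \Cref{alg:MPX} breaks them toward the larger fractional part. With your ordering by increasing $F_v$, these two rules disagree.

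Concretely, take two vertices $a,b$ joined by an edge of weight $1$, and condition on $N_a=0$, $N_b=1$, $F_a<F_b$, which is a positive-probability event. In \Cref{alg:MPX} we have $\dist(a,b)-R_b=-F_b<-F_a=\dist(a,a)-R_a$, so $r_a=b=r_b$ and the output is $\{\{a,b\}\}$. Under your coupling $v_1=a$ and $v_2=b$. At step $2$, vertex $a$ has integer key $0$ for both centers, so it stays in $C_a$, while $b$ (key $-1<1$) moves to $C_b$. The output is $\{\{a\},\{b\}\}$. Hence the clusterings differ with positive probability, and the claimed almost-sure equality fails.

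The fix is to order $V$ by \emph{decreasing} $F_v$, which is exactly the paper's coupling. Then the earliest minimizer of the integer key is the one with the largest fractional part, matching the $\arg\min$ in \Cref{alg:MPX}. The reversal of a uniform permutation is still uniform and independent of $\{N_v\}$, so the rest of your argument goes through unchanged. Two of your details are ones the paper's two-line proof leaves implicit: your justification that $\lfloor R_v\rfloor$ and $R_v-\lfloor R_v\rfloor$ are independent (so the induced permutation is uniform and independent of the radii), and your triangle-inequality check that every nonempty label class in \Cref{alg:MPX} is an actual output cluster.
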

\begin{proof}
    By \Cref{obs:floor-of-exp}, since $R \sim \mathrm{Exp}(\lambda)$ for $\lambda \coloneqq \frac{c_{\text{mpx}} \log n}{D}$, $\lfloor R \rfloor \sim \mathrm{Geom}(p)$ for $p \coloneqq 1 - e^{-\lambda} = 1 - \exp(\frac{-c_{\text{mpx}} \log n}{D})$. Thus, we can couple the randomness that determines the random variables $R_v$ so that $\lfloor R_v \rfloor$ in \Cref{alg:MPX} equals $R_v$ in \Cref{alg:MPX2} almost surely.
    
    To couple the tiebreaking, we let the random permutation in \Cref{alg:MPX2} be sorted in decreasing order by the fractional part of the variables $R_v$ in \Cref{alg:MPX}.
\end{proof}

We refer to iterations of the main for-loop in \Cref{alg:MPX2} as time steps, denote by $\cut_t$ the value of the cut parameter sampled at time $t$ and by $B_t \coloneqq \{u \in V \mid \dist(u, v_i) - R_{v_i} < \min_{j < i} \dist(u, v_j) - R_{v_j}\}$ the \textit{active ball} at time $t$ that would be assigned to $C_{v_i}$ if a cut was made that time step. We say the active ball \textit{threatens} an edge $e$ if $e \in \delta(B_t)$.

The analysis mostly proceeds similarly to the analysis of GEOM from \Cref{sec:geom-analysis}. In the analysis of GEOM, we use the fact that each edge $e \in P$ is threatened during at most $w(e)$ time steps, thus $\sum_t X_t \leq w(P)$ always holds, where $X_t$ is the number of edges threatened in time step $t$. This is not true for MPX, as future clusters can grow over existing ones.

To obtain a bound on $\sum_t X_t$, we exploit the fact that when the clusters are grown in random order, each vertex changes clusters at most $O(\log n)$ times with high probability. To see this, fix the sampled head starts $R_v$, then randomly permute the vertices. The number of times a fixed vertex $u$ changes clusters now equals the number of (strict) prefix minimums of $\dist(u, v_i) - R_{v_i}$, which is $O(\log n)$ with high probability. We prove \Cref{lem:rand-perm-pref-mins} in \Cref{sec:deferred} for completeness.

\begin{restatable}{lemma}{randpermprefmins}\label{lem:rand-perm-pref-mins}
    Let $\pi$ be a random permutation of $[n] = \{1, 2, \dots, n\}$. Call $\pi_i$ a \textit{prefix minimum} if $\pi_i < \min_{j < i} \pi_j$. Then, the number of prefix minima is $O(\log n)$ with high probability.
\end{restatable}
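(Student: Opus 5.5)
The plan is the standard record-counting argument. For $i \in [n]$, let $I_i$ be the indicator that $\pi_i$ is a prefix minimum. The number of prefix minima is then $N \coloneqq \sum_{i=1}^n I_i$. We would first show that the $I_i$ are independent Bernoulli variables with $\mathbb{P}(I_i = 1) = 1/i$, and then apply the Chernoff bound (\Cref{thm:chernoffBound}).

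\textbf{Distribution of each $I_i$.} By symmetry of the uniform random permutation, the relative order of the first $i$ entries $\pi_1, \dots, \pi_i$ is a uniformly random ordering of that set. Hence $\pi_i$ is the smallest of the first $i$ entries with probability exactly $1/i$. Note that $I_1 = 1$ always, since the condition over an empty prefix holds vacuously.

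\textbf{Independence.} This is the only step that needs real care. We would use the classical encoding of a uniform permutation by its ``relative ranks''. For each $i$, let $\rho_i \in \{1, \dots, i\}$ be the rank of $\pi_i$ among $\pi_1, \dots, \pi_i$. The map $\pi \mapsto (\rho_1, \dots, \rho_n)$ is a bijection between permutations of $[n]$ and $\prod_{i=1}^n [i]$. Both sets have size $n!$, and $\pi$ can be reconstructed from the ranks backwards, starting from $\pi_n$. So a uniform $\pi$ corresponds to independent uniform $\rho_i$. Since $I_i = \mathbb{I}[\rho_i = 1]$, the $I_i$ are independent.

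\textbf{Concentration.} The expectation is $\mu \coloneqq \mathbb{E}[N] = H_n = \sum_{i \le n} 1/i \le 1 + \ln n$. We apply \Cref{thm:chernoffBound} with $H = 1$ and $\delta$ chosen so that $(1+\delta)\mu = c \log n$ for a large constant $c$; for such $\delta$ we have $\delta \ge 1$. Then $\delta^2/(2+\delta) \ge \delta/3$, and the failure probability is at most
\[
\exp(-\delta\mu/3) \le \exp\bigl(-(c\log n - \mu)/3\bigr) \le n^{-\Omega(c)}.
\]
Choosing $c$ large enough makes this $n^{-k}$ for any desired constant $k$. So $N = O(\log n)$ with high probability, as claimed.

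If one prefers to avoid Chernoff's $\mu$-dependence, one can dominate $\mu$ from above: the bound $\mathbb{P}(Y \ge t) \le \exp(-(t-\mu')/3)$ holds for any upper bound $\mu' \ge \mu$ and $t \ge 2\mu'$. This follows by padding $N$ with extra independent Bernoulli terms, which only increases $N$ stochastically.
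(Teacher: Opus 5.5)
Your proposal is correct and takes essentially the same route as the paper: independent prefix-minimum indicators with $\mathbb{P}(I_i = 1) = 1/i$, mean $H_n = \Theta(\log n)$, and the Chernoff bound with $\delta \ge 1$. Your relative-rank bijection repackages the paper's independence argument, which samples $\pi_n, \pi_{n-1}, \dots, \pi_1$ backwards and notes that the conditional probability $1/i$ does not depend on the later entries.
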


Now, assume each vertex changes clusters at most $O(\log n)$ times. For an edge to be threatened while the $i$\textsuperscript{th} cluster is being grown, at least one of its two endpoints must have joined that cluster. As that active ball still threatens it for at most $2w(e)$ time steps, it must be threatened for at most $O(\log(n)) \cdot w(e)$ time steps in total. Thus, $\sum_t X_t = O(\log n) \cdot w(P)$.



\mpxdistancepreserving*

\begin{proof}
    Fix two vertices $u, v$, and let $P$ be a shortest $(u, v)$-path in $G$.

    For a vertex $u$, say $u$ is \textit{touched} by the $i$\textsuperscript{th} cluster if $u$ is in $C_i$ when it is initially created. Let $I_{\text{touch}}$ be the event that no vertex is touched more than $c_{\text{pref}} \log n$ times. Then, by \Cref{lem:rand-perm-pref-mins}, $I_{\text{touch}}$ occurs with high probability for sufficiently large constant $c_{\text{pref}}$.

    Let $X_t$ be a random variable denoting the number of edges of $P$ the active ball at time $t$ threatens. Given $I_{\text{touch}}$, we have $\sum_t X_t \leq 4c_{\text{pref}} \log(n) \cdot w(P)$, as an edge $e$ can only be threatened by the active ball while the $i$\textsuperscript{th} cluster is being grown if at least one of its endpoints is touched by the $i$\textsuperscript{th} cluster, and $e$ is threatened for at most $2w(e)$ time steps of that cluster's growth. 

    Let $Y_t \coloneqq \mathbb{I}[\text{cut}_t] \cdot \min(\epsilon D \cdot X_t, 3D)$. Conditioning on every cluster in the final clustering $\mathcal{C}$ having diameter at most $D$, an application of \Cref{clm:shortcutting} shows that $\dist_{\widetilde{G}}(u,v) \leq w(P) + \sum_t Y_t$. Indeed, we can assign each perturbed edge of $P$ to the \emph{last} cluster that cut it, and each unperturbed edge to the cluster it is contained in. For a cluster produced at time $t$, the assigned edges have a total weight increase of \emph{at most} $\epsilon D \cdot X_t$, and each edge's weight increases by at most $\epsilon D \leq D$, thus shortcutting through a cluster costs $D + 2D = 3D$.

    We can bound the expectation of the sum $\sum_t Y_t$ by bounding each minimum term by its first argument.
    To obtain a concentration bound, we will use Freedman's inequality (\Cref{FreedmanInequality}) on the martingale $M_t = M_{t - 1} + Z_t$, where $M_0 = 0$ and $Z_t$ is the centered version of $Y_t$, specifically
    \begin{equation*}
        Z_t \coloneqq Y_t - \mathbb{E}[Y_t \mid X_t] = \min\left(\epsilon D \cdot X_t, 3D\right) \cdot \begin{cases}
            (1 - p) &\text{if $\cut_t$}\\
            -p &\text{otherwise}
        \end{cases}.
    \end{equation*}
    For variance, a direct calculation gives
    \begin{align*}
        \mathbb{E}[Z_t^2 \mid X_t]   &= \min\left(\epsilon D \cdot X_t, 3D\right)^2 \cdot (p (1 - p)^2 + p^2 (1 - p))\\
            &\leq \epsilon D \cdot X_t \cdot 3D \cdot p (1 - p)\\
            &\leq 3 c_{\text{mpx}} \cdot \epsilon D \log(n) \cdot X_t
    \end{align*}
    where we use that $p = 1 - \exp(\frac{-c_{mpx} \log n}{D}) \leq \frac{ c_{mpx} \log n}{D}$ since $e^{-x} \geq 1 - x$. As the distribution of $Z_t$ is completely determined by $X_t$, we thus have
    \begin{equation*}
        W_t = \sum\nolimits_{j = 1}^{t} \mathbb{E}[Z_j^2 \mid X_j] \leq 3 c_{\text{mpx}} \cdot \epsilon D \log(n) \cdot \sum_{j = 1}^{t} X_j
    \end{equation*}
    and, given $I_{\text{touch}}$ occurs, for every $t$,
    \begin{equation*}
        W_t \leq 12 c_{\text{mpx}} c_{\text{pref}} \cdot \epsilon D \log^2(n) \cdot w(P).
    \end{equation*}
    Let $R \coloneqq 3 D$ and $\sigma^2 = O(\epsilon D \log^2 n) w(P)$ be the right-hand side of the above inequality. As $Z_t \leq R$ holds for every $t$, by Freedman,
    \begin{align*}
        \mathbb{P}\left(\sum\nolimits_t Z_t \geq k\right)    &\leq \mathbb{P}(\exists t : M_t \geq k)\\
            &\leq \mathbb{P}(\exists t : M_t \geq k \text{ and } W_t \leq \sigma^2) + \mathbb{P}(\overline{I_{\text{touch}}})\\
            &\leq \exp\left(\frac{-k^2 / 2}{\sigma^2 + Rk / 3}\right) + \mathbb{P}(\overline{I_{\text{touch}}}).
    \end{align*}
    Plugging in $k = 3 \sigma^2 / R + \gamma D$ for $\gamma = \Theta(\log n)$ to be determined later, the larger term in the divisor is $Rk / 3 = Dk$, and we can bound the exponential by
    \begin{equation*}
        \exp\left(\frac{-k^2 / 2}{\sigma^2 + Rk / 3}\right) \leq \exp\left(\frac{-k^2 / 2}{2Dk}\right) = \exp\left(\frac{-k}{4D}\right) \leq \exp(-\gamma / 4)
    \end{equation*}
    thus for large enough $\gamma = \Theta(\log n)$, we get that $\sum_t Z_t \leq k$ with high probability. Finally, we have
    \begin{equation*}
        \sum_t \mathbb{E}\left[Y_t \mid X_t\right]
            \leq \sum_t \mathbb{E}\left[\mathbb{I}[\cut_t] \cdot \epsilon D \cdot X_t \mid X_t\right]
            \leq \epsilon p D \cdot \sum_t X_t.
    \end{equation*}
    Thus, conditioning on $I_{\text{touch}}$, $\sum_t Z_t \leq k$, and each cluster having diameter at most $D$, which are each individually high probability events, we obtain that with high probability,
    \begin{align*}
        \dist_{\widetilde{G}}(u, v)
            &\leq w(P) + \sum\nolimits_t Y_t\\
            &= w(P) + \sum\nolimits_t Z_t + \sum\nolimits_t \mathbb{E}[Y_t \mid X_t]\\
            &\leq w(P) + k + \epsilon p D \cdot 4 c_{\text{pref}} \log(n) \cdot w(P)\\
            &= (1 + \epsilon \beta) \cdot w(P) + \gamma D\\
            &= (1 + \epsilon \beta) \dist_G(u, v) + \gamma D.
    \end{align*}
    for some $\beta = O(\log^2 n)$. Union bounding over all $\Theta(n^2)$ pairs $u, v$ gives the desired result.
\end{proof}

\section{Diameter Preserving Hierarchies}\label{sec:hierarchy-diameter-analysis}

\begin{algorithm}[H]
\caption{\textsc{LDDHierarchy}($G$, $\alpha$, $\textsc{LDD}$)}
$L \gets \lceil \log_{\alpha} (nW) \rceil$\\
$D_\ell \gets \alpha^\ell$ for every $\ell \in \{0, 1, \dots, L\}$\\
$\mathcal C_L \gets \{V\}$\\
\For{$\ell = L-1, \ldots, 1$}{
    $G_\ell \gets G \setminus \delta(\mathcal{C}_{\ell + 1})$\\
    $\mathcal{C}_\ell \gets \textsc{LDD}(G_\ell, D_\ell)$
}
$\mathcal C_0 \gets \{\{v\}\}_{v \in V}$\\
\Return{$\{\mathcal{C}_\ell\}_\ell$}
\end{algorithm}
\vspace{0.5cm}

In this section, we prove diameter guarantees \emph{in the perturbed graph} for LDD hierarchies built with distance preserving LDD algorithms. We desire the following kind of guarantee:
\begin{definition}[Diameter Preserving LDD Hierarchy]\label{def:diameter-preserving}
    For an LDD hierarchy $\{\mathcal{C}_\ell\}_{\ell}$ on a graph $G = (V, E, w)$ and a parameter $\epsilon \geq 0$, let $\widetilde{G} = (V, E, \widetilde{w})$ be $G$ with edge weights $\widetilde{w}(e) \coloneqq w(e) + \epsilon D_{\ell_e}$, where $\ell_e \coloneqq \max \{\ell : e \in \delta(\mathcal{C}_\ell)\}$ is the highest level that cuts $e$, and $D_{\ell_e}$ is its diameter bound.
    
    An LDD hierarchy is \textit{$(\epsilon,c)$-diameter preserving}, if $\diam(\widetilde{G}[C]) \leq c D_\ell$ for each $\ell \in [L]$ and $C \in \mathcal{C}_\ell$, i.e. if the diameter guarantee for every cluster worsens by at most a multiplicative $c$-factor.
\end{definition}

We obtain the following. Combined with the previous section, it implies as a corollary that hierarchies built with \textsc{GEOM} or \textsc{MPX} preserve diameters up to a constant factor for any noise magnitude $\epsilon = O(1 / \log^2 n)$, as claimed in the technical overview.

\begin{lemma}[Distance-Preserving to Diameter-Preserving]\label{lem:dist-to-diam}
    Let $\textsc{LDD}$ be an LDD algorithm with slack $s$ that is $(\beta,\gamma)$-distance preserving and in a disconnected graph produces clusterings independently for each connected component.
    
    Then, there is a minimum growth factor $\alpha_0 = O(\gamma + \log n)$ such that for any $\alpha \geq \alpha_0$, the hierarchy $\{\mathcal{C}_\ell\}_{\ell \in [L]} = \textsc{LDDHierarchy}(G, \alpha, \textsc{LDD})$ is $(\epsilon,c)$-diameter preserving with high probability for $\epsilon = \frac{1}{L s + \beta}$ and $c = O(1)$.
\end{lemma}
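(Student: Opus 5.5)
We argue by induction on $\ell$ from $1$ upward, with a constant $c$ (say $c=7$) fixed in advance. The invariant at level $\ell$ is that every cluster $C\in\mathcal{C}_{\ell}$ satisfies $\diam(\widetilde{G}[C])\le cD_{\ell}$. Each step will fail with probability at most $n^{-\Omega(1)}$, so a union bound over the $L=O(\log n)$ levels and the clusters gives the lemma. The key structural fact is that $\widetilde{w}$ restricted to $G[C]$ for $C\in\mathcal{C}_\ell$ depends only on levels $\ell-1,\dots,1$ of the hierarchy inside $C$. This is because edges of $G[C]$ are not cut at levels $\ge \ell$, and we assumed the LDD treats connected components independently.

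\textbf{Step 1 (one level down).} Fix $\mathcal{C}_L,\dots,\mathcal{C}_\ell$ and a cluster $C\in\mathcal{C}_\ell$; with high probability $\diam(G[C])\le D_\ell$. Let $\widetilde{G}_{\ell-1}$ be $G[C]$ with only the level-$(\ell-1)$ increases $\epsilon D_{\ell-1}$ on edges of $\delta(\mathcal{C}_{\ell-1})$. Since $\epsilon\le 1/\beta$, $(\beta,\gamma)$-distance preservation gives, with high probability,
\[
\diam(\widetilde{G}_{\ell-1})\le (1+\epsilon\beta)D_\ell+\gamma D_{\ell-1}\le 2D_\ell+\gamma D_\ell/\alpha\le 3D_\ell
\]
once $\alpha\ge\gamma$. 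Here we use that the true $\widetilde{w}$ on an edge cut at level $\ell-1$ equals exactly $w+\epsilon D_{\ell-1}$, because level $\ell-1$ is the highest level cutting it inside $C$.

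\textbf{Step 2 (lower levels, via shortcutting and Chernoff).} Fix $u,v\in C$ and a shortest $u$--$v$ path $P$ in $\widetilde{G}_{\ell-1}$, so $w(P)\le\widetilde{w}_{\ell-1}(P)\le 3D_\ell$. Now condition on $\mathcal{C}_{\ell-1}$. The remaining increases come from levels $\le \ell-2$ and live on edges inside clusters $C'\in\mathcal{C}_{\ell-1}$; they are independent across distinct $C'$. Partition the edges of $P$ by cluster of $\mathcal{C}_{\ell-1}$, assigning level-$(\ell-1)$ boundary edges arbitrarily, and apply \Cref{clm:shortcutting} with $\Delta$ the full perturbation. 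This gives
\[
\dist_{\widetilde G}(u,v)\le w(P)+\Delta_{\ell-1}(P)+\sum_{C'}\min\{\Delta_{<\ell-1}(P_{C'}),\,cD_{\ell-1}+2\epsilon D_{\ell-1}\},
\]
where the cap uses the induction hypothesis $\diam(\widetilde{G}[C'])\le cD_{\ell-1}$ and the fact that boundary increases are $\epsilon D_{\ell-1}$. (The hypothesis at level $\ell-1$ holds for all clusters w.h.p., so we condition on it; we need to check that this conditioning does not distort independence, and it is cleanest to replace each summand by its capped version, which is independent across $C'$ regardless.)

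Each summand lies in $[0,H]$ with $H=(c+2)D_{\ell-1}=O(D_\ell/\alpha)$. By slack $s$ at each of at most $L$ lower levels, the expectation of the sum is at most $\sum_{e\in P}\sum_{j<\ell-1}\epsilon D_j\cdot s\,w(e)/D_j\le \epsilon Ls\,w(P)\le 3D_\ell$, using $\epsilon\le 1/(Ls)$. We apply \Cref{thm:chernoffBound} after padding $\mu$ up to $3D_\ell$ (monotonicity in $\mu$) with $\delta=1/3$. The deviation probability is then $\exp(-\Omega(D_\ell/H))=\exp(-\Omega(\alpha/c))$, which is $n^{-\Omega(1)}$ once $\alpha\ge\alpha_0=\Theta(\log n)$. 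On this event $\dist_{\widetilde{G}[C]}(u,v)\le w(P)+\Delta_{\ell-1}(P)+4D_\ell\le 3D_\ell+4D_\ell=7D_\ell$. Note that $w(P)+\Delta_{\ell-1}(P)=\widetilde w_{\ell-1}(P)\le 3D_\ell$, and that the shortcut paths stay inside $C$ because the clusters $C'$ are subsets of $C$. A union bound over the pairs $u,v$ and the clusters $C$ closes the induction with $c=7$. The base case $\ell=1$ has no lower levels below $\mathcal{C}_0$, so only Step 1 is needed.

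\textbf{Main obstacle.} The delicate point is Step 2's independence and conditioning. The induction hypothesis is a high-probability event over the lower randomness, not a deterministic bound, and conditioning on it could correlate clusters. We handle this by capping: we use the deterministic bound $\min\{\cdot,H\}$ on independent variables, and only afterward intersect with the event that the caps are valid, so that \Cref{clm:shortcutting} applies. A second check is that a failure of the LDD diameter bound at some level is absorbed into the overall failure probability.
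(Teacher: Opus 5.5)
Your proposal is correct and follows essentially the same route as the paper's proof. Both proofs induct on levels, apply distance preservation one level down to get a path of weight at most $3D_\ell$ inside $C$, Chernoff-bound the capped and independent per-cluster increases from the levels below $\ell-1$ (cap $O(D_{\ell-1}) = O(D_\ell/\alpha)$), and shortcut through lower-level clusters under the induction hypothesis. The differences are cosmetic: you shortcut relative to $w$ (so $M=\epsilon D_{\ell-1}$) instead of relative to $\widetilde{w}_{\ell-1}$ (where $M=0$), and you pad the mean to $3D_\ell$ with $\delta=1/3$ instead of taking $\delta=3D_\ell/\mu$, which gives $c=7$ rather than the paper's $c=9$.
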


\begin{proof}
    
    Assume without loss of generality that $\gamma = \Omega(\log n)$. Let $c_{succ}$ be a constant determining the desired success probability and let $c=9$. The precise value of $\alpha_0$ will be determined later, and will be chosen such that $\alpha_0 \ge \gamma$.
    
    Let $I_{\leq \ell}$ be the event that for every $\ell' \leq \ell$ and every cluster $C \in \mathcal{C}_{\ell'}$, $\diam(\widetilde{G}[C]) \leq c D_{\ell'}$. With this definition, $I_{\leq L}$ is equivalent to the hierarchy being $\epsilon$-diameter preserving. We prove by induction on increasing $\ell$ the following bound:
    \begin{equation*}
        \mathbb{P}(I_{\leq \ell}) \geq 1 - \ell n^{-c_{succ} - 1}.
    \end{equation*}
    
    For the base case, note that the event $I_{\leq 0}$ occurs with probability $1$, as the clusters of $\mathcal{C}_0$ are single vertices and thus have diameter $0$. The event $I_{\leq 1}$ occurs with high probability, as it is implied by the cluster diameter guarantees in $G$. 

    Suppose now that $\ell \geq 2$ and that the claim holds for strictly smaller $\ell$. We prove the claim for $\ell$. Fix the clusterings $\mathcal{C}_L, \dots, \mathcal{C}_\ell$, and suppose that the diameter guarantees in $G$ for these clusterings hold, so that $\diam(G[C]) \leq D_{\ell'}$ for every $\ell' \in \{L, \dots, \ell\}$ and $C \in \mathcal{C}_{\ell'}$, which occurs with high probability. Consider some cluster $C \in \mathcal{C}_\ell$ and two vertices $u, v \in C$. 

    Let $\widetilde{G}_{\ell-1}$ be $G$ with edge weights $\widetilde{w}_{\ell - 1}(e) \coloneqq w(e)+ \mathbb{I}[e \in \delta(\mathcal C_{\ell-1})]\cdot \epsilon D_{\ell-1}$.
    Since the LDD algorithm is $(\beta,\gamma)$-distance preserving and computes the clustering $\mathcal{C}_{\ell - 1}$ in the graph $G \setminus \delta (\mathcal{C}_\ell)$, there exists with probability at least $1 - n^{-c_{succ} - 4}$ a path $P$ in $\widetilde{G}_{\ell - 1}[C]$ of weight
    \begin{align*}
        \widetilde{w}_{\ell - 1}(P) &\leq (1  + \epsilon \beta) \dist_{G[C]}(u, v) + \gamma D_{\ell - 1}\\
            &\leq (1  + \epsilon \beta + \gamma / \alpha_0) D_\ell\\
            &\leq 3 D_\ell.
    \end{align*}
    Suppose this event occurs, and fix one such path. Let $C_1, \dots, C_k$ be the clusters of $\mathcal{C}_{\ell - 1}$ containing both endpoints of at least one edge on $P$, and let $P_i$ be the subset of edges of $P$ contained in this way inside $C_i$. Let $X_i$ be a random variable defined as
    \begin{equation*}
        X_i \coloneqq \min\{\widetilde{w}(P_i) - w(P_i), c D_{\ell - 1}\},
    \end{equation*}
    i.e. the minimum of the weight increase from the levels of the hierarchy below $\ell - 1$ to the edges $P_i$, or the diameter guarantee we obtain for clusters of the lower level. We have
    \begin{equation*}
        \mu \coloneqq \mathbb{E}\left[\sum_i X_i\right] \leq \epsilon s (\ell - 1) \sum_i w(P_i) \leq \sum_i w(P_i) \leq 3 D_\ell
    \end{equation*}
    as each of the $\ell - 1$ levels below increase in expectation the weight of any edge $e \in P$ by at most $\epsilon s \cdot w(e) \leq \frac{1}{L} w(e)$.
    Since each individual $X_i$ is additionally upper bounded by $c D_{\ell - 1}$, applying Chernoff (\Cref{thm:chernoffBound}) with $\delta \coloneqq \frac{3 D_\ell}{\mu} \geq 1$ gives
    \begin{align*}
        \mathbb{P}\left(\sum_i X_i \leq (1 + \delta) \mu\right)
        &\geq 1 - \exp\left(- \frac{\delta^2 \mu}{2 + \delta} \cdot \frac{1}{c D_{\ell - 1}}\right)\\
        &\geq 1 - \exp\left(- \frac{\delta \mu}{3} \cdot \frac{1}{c D_{\ell - 1}}\right)\\
        &\geq 1 - \exp\left(-\frac{D_\ell}{3 c D_{\ell - 1}}\right)\\
        &\geq 1 - \exp\left(- \frac{\alpha_0}{3c}\right)\\
        &\geq 1 - n^{- c_{succ} - 4},
    \end{align*}
    where we selected $\alpha_0 = \gamma + 3c(c_{succ} + 4) \log n = O(\gamma + \log n)$.
    Let $I_{uv}$ be the event that the diameter guarantees in $G$ hold, a desired path $P$ exists, and this upper bound on $\sum_i X_i$ holds. This event satisfies $\mathbb{P}(I_{uv}) \geq 1 - 3 n^{-c_{succ} - 4}$.
    
    Suppose that both $I_{uv}$ and $I_{\leq \ell-1}$ hold. We now show that this implies that $\dist_{\widetilde{G}[C]}(u, v) \leq c D_\ell$ for $c = 9$. It suffices to show that $\dist_{\widetilde G[C]}(u,v) \leq \widetilde w_{\ell-1}(P)+\sum_i X_i$. For this, we apply \Cref{clm:shortcutting} for initial graph $\widetilde G_{\ell-1}[C]$, perturbed graph $\widetilde G[C]$ and clustering $\mathcal C_{\ell-1}$. Since the weight of edges separated by $\mathcal C_{\ell-1}$ does not change, we may use any valid partition of the edges of $P$, and $M = 0$.

    Finally, extend the definition of $I_{uv}$ to vertices $u, v$ not sharing a cluster in $\mathcal{C}_\ell$ by defining $I_{uv}$ to always occur in this case. Now, if all events $I_{uv}$ for pairs $u, v \in V$ and the event $I_{\leq \ell - 1}$ occur, then the event $I_{\leq \ell}$ also occurs. We thus obtain by a union bound and the induction assumption that
    \begin{align*}
        \mathbb{P}\left(I_{\leq \ell}\right)    &\geq \mathbb{P}\left(I_{\leq \ell - 1}\right) - 3 n^2 \cdot n^{-c_{succ} - 4}\\
            &\geq 1 - (\ell - 1) n^{-c_{succ} - 1} - n^{-c_{succ} - 1}\\
            &= 1 - \ell \cdot n^{-c_{succ} - 1}
    \end{align*}
    as desired.
\end{proof}


\section{LSSTs From Diameter Preserving Hierarchies}

In this section, we analyze the perturbed-SSSP algorithm (\Cref{alg:low_stretch}), establishing \Cref{lem:hierarchy-to-lsst}.

\begin{algorithm}[h]
\caption{\textsc{PerturbedSSSP}($G$, $\alpha$, $\epsilon$, \textsc{LDD})}
\label{alg:low_stretch}
$r \gets \text{an arbitrary vertex}$\\
$L \gets \lceil \log_{\alpha} (nW) \rceil$\\
$\{\mathcal{C}_\ell\}_\ell \gets \textsc{LDDHierarchy}(G,\alpha, \textsc{LDD})$\\
\For{$e \in E$}{
    $\ell_e \gets \max\{\ell:e\in\delta(\mathcal{C}_\ell)\}$\\
    Sample $X_e \sim \mathrm{Geom}\left(\frac{3 \ln nW}{\epsilon D_{\ell_e}}\right)$\\
    $\widetilde{w}(e) \gets w(e) + \max(0, \epsilon D_{\ell_e} - X_e)$\\
}
$\widetilde{G} \gets (V,E,\widetilde{w})$\\
$\widetilde{T} \gets \text{shortest path tree rooted at $r$ in $\widetilde{G}$}$\\
\Return{$\widetilde{T}$}\\
\end{algorithm}

\begin{lemma}\label{lem:hierarchy-to-lsst}
    Suppose $\textsc{LDD}$ has slack $s$ and $\textsc{LDDHierarchy}(G,\alpha, \textsc{LDD})$ is $(\epsilon, c)$-diameter preserving with probability at least $1 - \frac{1}{n^3 W}$. Then, $\widetilde{T} = \textsc{PerturbedSSSP}(G, \alpha, \epsilon, \textsc{LDD})$ satisfies
    \begin{equation*}
        \mathbb{E}[\dist_{\widetilde{T}}(u, v)] \leq \dist_G(u, v) \cdot c\alpha L (\epsilon^{-1}+Ls) \cdot O(\log n),
    \end{equation*}
    where by $\widetilde T$ we denote the resulting spanning tree, with each edge $e\in E(\widetilde T)$ assigned weight $\widetilde w(e)$.
\end{lemma}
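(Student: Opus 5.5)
By linearity along a shortest $u$--$v$ path in $G$ and the triangle inequality in $\widetilde T$, it suffices to prove the bound for a single edge $(u,v)\in E$. I would prove a level-by-level tail bound of the form \eqref{eq:technical-main}: for each $\ell\in[L]$,
\begin{equation*}
\mathbb{P}\left[\dist_{\widetilde T}(u,v) > 2cD_\ell\right] \le \frac{w(u,v)}{D_\ell}\cdot \left(s + O(\log n)(\epsilon^{-1}+Ls)\right) + \frac{1}{n^3W}.
\end{equation*}
Given this, I split $\E[\dist_{\widetilde T}(u,v)]$ over the ranges $(2cD_{\ell-1},2cD_\ell]$ and pay $2cD_\ell$ times the probability of exceeding $2cD_{\ell-1}$. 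This gives $\alpha c\, w(u,v)\, O(\log n)(\epsilon^{-1}+Ls)$ per level, hence a total over $L$ levels matching the claim. The failure event of diameter preservation, of probability at most $1/(n^3W)$, contributes at most $\frac{1}{n^3W}\cdot \widetilde w(\widetilde T)$. This is bounded by $n\cdot nW(1+\epsilon)$, which is $O(1)\le O(w(u,v))$.

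For a fixed $\ell$, I condition on $\mathcal{C}_L,\dots,\mathcal{C}_\ell$. By the slack of the LDD and the nesting of the hierarchy, $u,v$ are separated at level $\ell$ with probability at most $s\,w(u,v)/D_\ell$ (each of those levels with $D_{\ell'}\ge D_\ell$ separates with probability at most $s\,w/D_{\ell'}$, and summing the geometric series gives $O(s\,w/D_\ell)$). So assume $u,v\in C\in\mathcal{C}_\ell$. If $r\in C$, or more generally if the tree paths from $r$ to $u$ and to $v$ share a last boundary edge of $\delta(C)$, then their lowest common ancestor $x$ lies in $C$. Since tree distances from an ancestor equal $\widetilde G$-distances, this gives $\dist_{\widetilde T}(u,v)\le 2\diam(\widetilde G[C])\le 2cD_\ell$ on the diameter-preserving event. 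Here I use that the actual weights $w+\max(0,\epsilon D_{\ell_e}-X_e)$ are dominated by the weights of \Cref{def:diameter-preserving}.

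The remaining and main step is to bound the probability that the tree paths enter $C$ through different boundary edges. I would adapt the memoryless clock argument of the overview. Condition on all $X_e$ for $e\notin\delta(C)$. Every $e\in\delta(C)$ has $\ell_e\ge\ell$, so its $X_e$ is geometric with stopping probability at most $q=3\ln(nW)/(\epsilon D_\ell)$. Let $P_1$ be the shortest $r$--$u$ path and $P_2$ the shortest $r$--$u$ path avoiding $P_1$'s boundary edges. Revealing clocks outside the current winner first keeps an alive clock on the winner's boundary edges (the last boundary edge used). At the end, its residual subtraction is a fresh geometric variable affecting only the winner. Hence $\widetilde w(P_2)-\widetilde w(P_1)$ stochastically dominates $\mathrm{Geom}(q)$, up to the truncation $X_e\le\epsilon D_{\ell_e}$. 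That truncation fails with probability $(1-q)^{\epsilon D}\le (nW)^{-3}$, which I absorb into the error term. This is where the details need the most care: the truncation at $0$, the integrality of ties, and defining the second path via boundary edges rather than vertices. Therefore, by $\widetilde w(P_2')-\widetilde w(P_1)\le 2\widetilde w(u,v)$, the bad event has probability at most $2q\,\widetilde w(u,v)$.

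Finally, I average over the lower levels $\mathcal C_{\ell-1},\dots,\mathcal C_1$. Given $u,v$ coclustered at level $\ell$, $\E[\widetilde w(u,v)]\le w(u,v)(1+\epsilon\sum_{\ell'<\ell}s)\le w(u,v)(1+\epsilon Ls)$. Thus the probability is at most $O(\log(nW))\,w(u,v)(\epsilon^{-1}+Ls)/D_\ell$, which together with the separation term yields the level bound, with $\log(nW)=O(\log n)$.
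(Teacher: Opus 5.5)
Your proposal is correct and follows essentially the same route as the paper. It has the same four ingredients: the level-by-level tail bound combining the separation probability over levels $\ell,\dots,L$ with a common ancestor inside $C$; the memoryless clock argument applied with $E'=\delta(C)$ to the untruncated, possibly negative, weights and coupled back to $\widetilde w$ via the truncation event; averaging $\widetilde w(u,v)$ over the lower levels; and summing over the ranges $(2cD_{\ell-1},2cD_\ell]$. The slips are cosmetic and do not affect the bound: the gap event yields a shared boundary edge of $\delta(C)$, hence some common ancestor in $C$ rather than necessarily the LCA via the last boundary edge, and the bad-event bound should read $q\,(2\widetilde w(u,v)+1)$ instead of $2q\,\widetilde w(u,v)$.
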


By linearity of expectation, it suffices to prove the claim for any pair $(u, v) \in E$ for which $\dist_G(u, v) = w(u, v)$. Fix one such pair. We define the following events:
\begin{itemize}
    \item Let $I_{\text{diam}}$ be the event that the hierarchy $\{\mathcal{C}_\ell\}_\ell$ is $(\epsilon, c)$-diameter preserving, i.e. that for every level $\ell \in [L]$ and every cluster $C \in \mathcal{C}_\ell$, $\diam(\widetilde{G}[C]) \leq c D_{\ell}$. By assumption, $\mathbb{P}(I_{\text{diam}}) \geq 1 - \frac{1}{n^3 W}$.
    
    \item Define $\overline{w}(e) = w(e) + \epsilon D_{\ell_e} - X_e$, and let $\overline{G} = (V, E, \overline{w})$ be $G$ with the (possibly negative) edge weights $\overline{w}$. Let $I_{\overline{w} = \widetilde{w}}$ be the event that $\overline{w}(e) = \widetilde{w}(e)$ for every edge $e$.
    
    A geometric tail bound gives
    \begin{equation*}
        \mathbb P[X_e > \epsilon D_{\ell_e}] \leq \left(1 - \frac{3 \ln nW}{\epsilon D_{\ell_e}}\right)^{\epsilon D_{\ell_e}} \leq \exp\left(-3 \ln nW\right) = (nW)^{-3}
    \end{equation*}
    for each $e \in E$. A union bound over at most $n^2$ edges yields $\mathbb P(I_{\overline{w}=\widetilde{w}}) \geq 1 - \frac{1}{nW}$.
    
    \item For $\ell \in [L]$, let $I_{\text{gap}, \ell}$ be an event defined as follows:
    \begin{itemize}
        \item If $u$ and $v$ appear in separate clusters in $\mathcal{C}_\ell$, $I_{\text{gap}, \ell}$ does not occur.
        \item Otherwise, $u, v \in C$ for some $C \in \mathcal{C}_\ell$. If also $r \in C$, then $I_{\text{gap}, \ell}$ occurs.
        \item Otherwise, $r \not\in C$. Let $P_1$ be a shortest edge-simple $r$-$u$ path in $\overline{G}$, and $P_2$ the shortest edge-simple $r$-$u$ path in $\overline{G}$ that does not share any edge in $\delta(C)$ with $P_1$. Then, $I_{\text{gap}, \ell}$ occurs when $\overline{w}(P_2) - \overline{w}(P_1) > 2 \widetilde{w}(u, v)$ regardless of the choice of $P_1$.
    \end{itemize}
\end{itemize}
We will bound the probability $I_{\mathrm{gap}, \ell}$ occurs later. First, the following lemma motivates the definitions of the individual events.

\begin{claim}\label{clm:good-events}
    Suppose that $I_{\mathrm{diam}}$, $I_{\overline{w} = \widetilde{w}}$, and $I_{\mathrm{gap}, \ell}$ all occur. Then, $\dist_{\widetilde{T}}(u, v) \leq 2c D_\ell$.
\end{claim}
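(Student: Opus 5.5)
The plan is to find a common ancestor $x$ of $u$ and $v$ in $\widetilde{T}$ that lies inside the cluster $C \in \mathcal{C}_\ell$ containing both $u$ and $v$, and then apply the triangle-inequality argument from the technical overview. Since $I_{\mathrm{gap},\ell}$ occurs, $u$ and $v$ share a cluster $C \in \mathcal{C}_\ell$. Under $I_{\overline{w}=\widetilde{w}}$ all weights are nonnegative and $\overline{G}=\widetilde{G}$, so $\widetilde{T}$ is a genuine shortest path tree of $\overline{G}$ and tree paths from $r$ are shortest paths.

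\emph{Step 1 (common ancestor in $C$).} If $r\in C$, take $x=r$. Otherwise, set $P_1 \coloneqq \widetilde{T}(r,u)$; this is a shortest edge-simple $r$-$u$ path in $\overline{G}$, so it is a valid choice for $P_1$ in the definition of $I_{\mathrm{gap},\ell}$. Let $P_2' \coloneqq \widetilde{T}(r,v)$ extended by the edge $(v,u)$. Because $\widetilde{T}(r,v)$ is a shortest path,
\begin{equation*}
\widetilde{w}(\widetilde{T}(r,v)) \le \widetilde{w}(P_1)+\widetilde{w}(u,v),
\end{equation*}
and therefore $\widetilde{w}(P_2')-\widetilde{w}(P_1)\le 2\widetilde{w}(u,v)$. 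We need $P_2'$ to be edge-simple; if $(u,v)$ already lies on $\widetilde{T}(r,v)$, then $u$ is an ancestor of $v$ and we simply take $x=u\in C$.

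Now $I_{\mathrm{gap},\ell}$ says every edge-simple $r$-$u$ path of weight at most $\overline{w}(P_1)+2\widetilde{w}(u,v)$ shares an edge of $\delta(C)$ with $P_1$. Hence $P_2'$ shares such an edge $f$ with $P_1$. The edge $(u,v)$ lies inside $C$ and so is not in $\delta(C)$, which means $f\in\widetilde{T}(r,v)\cap\widetilde{T}(r,u)$. Both tree paths therefore contain $f$, and the endpoint of $f$ farther from $r$ is a common ancestor of $u$ and $v$.

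\emph{Step 2 (the ancestor is in $C$).} This is the step I expect to be the main obstacle. Let $f=(a,b)$ with $b$ the child endpoint. If $b\in C$, take $x=b$. If instead $b\notin C$, then $a\in C$, so the tree path from $r$ has already entered $C$ at $a$ before reaching $b$. Since both $u$ and $v$ lie below $f$, $a$ is also a common ancestor, and it lies in $C$; take $x=a$. In either case some endpoint of $f$ is a common ancestor in $C$.

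\emph{Step 3 (distance bound).} With a common ancestor $x\in C$, subpaths of tree paths are shortest paths, so
\begin{equation*}
\dist_{\widetilde{T}}(u,v)\le \dist_{\widetilde{G}}(x,u)+\dist_{\widetilde{G}}(x,v)\le 2\diam(\widetilde{G}[C])\le 2cD_\ell,
\end{equation*}
where the last inequality uses $I_{\mathrm{diam}}$. Here $\dist_{\widetilde{G}}\le\dist_{\widetilde{G}[C]}$, and tree edges are weighted by $\widetilde{w}$.
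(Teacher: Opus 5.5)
Your proposal is correct and follows essentially the same route as the paper. It compares $\widetilde{T}(r,u)$ with $\widetilde{T}(r,v)$ extended by $(v,u)$, uses $I_{\mathrm{gap},\ell}$ to force a shared edge of $\delta(C)$ (which cannot be $(u,v)$), takes that edge's endpoint in $C$ as a common vertex of both tree paths, and concludes with $2\diam(\widetilde{G}[C]) \le 2cD_\ell$.

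There are only two cosmetic differences. The paper excludes the case $u \in \widetilde{T}(r,v)$, whereas you exclude only $(u,v) \in \widetilde{T}(r,v)$; that suffices, since $I_{\mathrm{gap},\ell}$ is phrased for edge-simple paths. Also, your Step 2 is immediate rather than an obstacle, because both endpoints of $f$ lie on both tree paths.
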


\begin{proof}
    Let $C \in \mathcal{C}_\ell$ be the cluster containing both $u$ and $v$, which must exist as $I_{\mathrm{gap}, \ell}$ occurs. Let $P_1 = \widetilde{T}(r, u)$ be the tree path to $u$, and $P_2' = \widetilde{T}(r, v)$ be the tree path to $v$. First, note that it suffices to show that there is some vertex $x \in C$ that appears on both $P_1$ and $P_2'$, as then
    \begin{align*}
        \dist_{\widetilde{T}}(u, v)
            &\leq \dist_{\widetilde{T}}(u, x) + \dist_{\widetilde{T}}(x, v)\\
            &= \dist_{\widetilde{G}}(u, x) + \dist_{\widetilde{G}}(x, v)\\
            &\leq 2 \diam(\widetilde{G}[C])\\
            &\leq 2c D_\ell.
    \end{align*}
    If $r \in C$, then $r$ is one such $x$. If $u$ appears on $P_2'$, then $u$ is one such $x$. Otherwise, let $P_2$ be $P_2'$ extended with the edge $(v, u)$, and note that we have
    \begin{align*}
        \widetilde{w}(P_2)
            &= \widetilde{w}(u, v) + \widetilde{w}(P_2')\\
            &= \widetilde{w}(u, v) + \dist_{\widetilde{G}}(r, v)\\
            &\leq 2\widetilde{w}(u, v) + \dist_{\widetilde{G}}(r, u)\\
            &= 2\widetilde{w}(u, v) + \widetilde{w}(P_1).
    \end{align*}
    Since $P_1$ is a shortest $r$-$u$ path in $\widetilde{G}$ (and thus also edge-simple and a shortest edge-simple $r$-$u$ path in $\overline{G}$ as $I_{\overline{w} = \widetilde{w}}$ occurred), $P_2$ is an edge-simple $r$-$u$ path of length at most $2 \widetilde{w}(u, v)$ more than $P_1$ in $\widetilde{G}$ (and thus also in $\overline{G}$), and $I_{\mathrm{gap}, \ell}$ occurred, there must be some edge $e \in \delta(C)$ that appears on both $P_1$ and $P_2$. That edge cannot be $(u, v)$ since both $u, v \in C$, thus $e$ also appears on $P_2'$. We can select $x$ to be the endpoint of $e$ inside $C$.
\end{proof}

To bound the probability $I_{\mathrm{gap}, \ell}$ occurs, we will use the following Lemma:

\begin{lemma}\label{lem:gap}
    Let $G = (V, E, w)$ be a graph with possibly negative edge weights, $s, t \in V$ two distinct vertices, and $E' \subseteq E$ any subset of edges such that there exists a shortest edge-simple $s$-$t$ path using at least one edge from $E'$.
    
    For every $e \in E'$, let $X_e \sim \mathrm{Geom}(p_e)$ be independent geometrically distributed random variables with individual parameters $p_e \in (0, p_{\max}]$ for $p_{\mathrm{max}} < 1$. Let $\overline{G} = (V, E, \overline{w})$ be $G$ with edge weights $\overline{w}(e) = w(e) - X_e$ for $e \in E'$ and $\overline{w}(e) = w(e)$ for all other edges.

    Let $P_1$ be a shortest edge-simple $s$-$t$ path in $\overline{G}$, and $P_2$ a shortest edge-simple $s$-$t$ path in $\overline{G} \setminus (E' \cap P_1)$, i.e. that does not use any of the edges in $E'$ that $P_1$ does. Then, for every $k \in \mathbb{Z}_{\geq 0}$,
    \begin{equation*}
        \mathbb{P}\left(\overline{w}(P_2) - \overline{w}(P_1) \geq k\right)
        \geq \left(1 - p_{\max}\right)^k.
    \end{equation*}
\end{lemma}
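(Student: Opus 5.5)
We prove \Cref{lem:gap} by a sequential revelation argument over the variables $\{X_e\}_{e \in E'}$, in the spirit of the clock process from the technical overview. The idea is to expose the geometric variables one unit at a time, exploiting memorylessness.

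\textbf{The process.} Each $e \in E'$ carries a clock that is alive with displayed value $a_e$, initially $0$. The current weight of $e$ is $w(e) - a_e$. At each step we pick an alive clock and flip a $\mathrm{Bernoulli}(p_e)$ coin. On success the clock breaks and $X_e = a_e$; otherwise $a_e$ increases by one. By memorylessness, any adaptive selection rule produces the correct joint law of $(X_e)_{e\in E'}$.

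We maintain the current shortest edge-simple $s$-$t$ path $Q$ under the current weights $w - a$, with ties broken toward paths that contain an alive clock. The invariant is that $Q$ contains an alive clock of $E'$. It holds initially by the hypothesis that some shortest edge-simple path uses an $E'$ edge and all clocks start alive.

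\textbf{Selection rule and invariant.} As long as some alive clock lies outside $Q$, we advance it. Decrementing an edge not on $Q$ can change the minimizer, but this causes a difficulty only when a new path becomes strictly shorter. Any such path uses the decremented edge, which is alive, so the invariant is preserved after retie-breaking. Ties also favour it.

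Once every alive clock lies on $Q$, we advance one of them. Failure decreases the weight of $Q$ and keeps $Q$ optimal. Success breaks the clock; if other alive clocks remain on $Q$, the invariant persists. The process must end with exactly one alive clock $f$, lying on $Q$. Before that point, I think the ordering has to be arranged so that the last alive clock is on the winning path. I would advance clocks off $Q$ first, and only when none remain, advance the clocks on $Q$.

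\textbf{Finishing.} At the moment only $f$ remains alive, $f \in Q$, and the remaining lifetime $Y$ of $f$ is independent $\mathrm{Geom}(p_f)$ by memorylessness. All other weights are now fixed. Let $A$ be the minimum weight over edge-simple paths through $f$, excluding the $-Y$ term, so that $A$ is attained by $Q$. Let $B$ be the minimum over paths avoiding $f$; then $B \ge A$ since $Q$ was optimal at this moment. With final weights, every path through $f$ has weight at most $A - Y$, so $P_1$ passes through $f$, because $A - Y \le B$ with ties handled consistently. Hence $f \in E' \cap P_1$, $P_2$ avoids $f$, and
\[
\overline w(P_2) - \overline w(P_1) \ge B - (A - Y) \ge Y.
\]
This gives $\mathbb P(\cdot\ge k) \ge (1-p_f)^k \ge (1-p_{\max})^k$.

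The special case where the process ends with all clocks broken and none on $Q$ cannot arise, since the last alive clock is always on $Q$. Termination holds almost surely.

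\textbf{Main obstacle.} The delicate point is the tie-breaking and the claim that $P_1$ must use $f$ regardless of which shortest path is chosen. When $Y = 0$ and $A = B$, there could be a tie; however, for $k = 0$ the bound is trivial. For $k \ge 1$ we need $Y \ge k \ge 1$, and then $A - Y < B$ strictly, so every shortest path uses $f$. The second point to verify is that adaptive revelation preserves the product geometric law, which is the standard memorylessness argument.
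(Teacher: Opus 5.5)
Your proposal is correct and follows the paper's proof essentially step for step. It uses the same unit-by-unit memoryless revelation of the $X_e$, the same rule of advancing edges off the current shortest path first (so that path always retains an unrevealed $E'$-edge), and the same endgame, in which the last remaining edge $f$ has a fresh $\mathrm{Geom}(p_f)$ residual lifetime while every path avoiding $f$ is frozen at weight at least $B \ge A$.

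One wording slip: with the final weights, every edge-simple path through $f$ has weight \emph{at least} $A - Y$ (attained by $Q$), not at most. Your conclusion only uses that the minimum over paths through $f$ is $A-Y$, so the argument is unaffected, and your explicit handling of ties via $Y \ge k \ge 1$ is slightly more careful than the paper's.
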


\begin{proof}
    Consider the following process that samples values for the variables $X_e$:
    \begin{itemize}
        \item Initially, let $X_e = 0$ for every $e \in E'$, and let $P$ be a shortest edge-simple $s$-$t$ path in $G$ on which at least one edge in $E'$ appears.
        \item Repeat the following until $E' = \emptyset$:
        \begin{itemize}
            \item Select an edge $e$ from $E'$. If there is an edge $e \in E' \setminus P$, select one such edge. Otherwise, select an arbitrary edge in $E'$.
            \item With probability $p_e$, remove $e$ from $E'$. Otherwise, let $X_e \gets X_e + 1$, then assign $P$ to be an arbitrary shortest edge-simple $s$-$t$ path in $\overline{G}$ on which at least one edge in $E'$ appears.   
        \end{itemize}
    \end{itemize}
    This process terminates in finite time, as $p_e > 0$ for every edge $e \in E'$. The final joint distribution of $X_e$ is the desired distribution: one can easily verify that $\mathbb{P}(X = X') = \prod_{e \in E'} (1 - p_e)^{X_e'} p_e$ for any fixed vector $X' \in \mathbb{Z}_{\geq 0}^{E'}$ of target values.

    We now first prove that initially and after each step of the process, $P$ is a shortest edge-simple $s$-$t$ path in $\overline{G}$. This holds initially by the requirement placed on $E'$. In each step, if we remove the selected edge from $E'$, no weights change, so $P$ must still be a shortest path. If we decrease the weight of the selected edge $e$, and this makes $P$ no longer be a shortest edge-simple $s$-$t$ path, then the new shortest path must use the selected edge $e$, which remains in $E'$.

    Consider now the first point in the process where $|E'| = 1$, and let $e$ be the sole remaining edge and $d = \overline{w}(P)$ the weight of $P$ in the current graph at that point. Since $P$ at that point is a shortest edge-simple $s$-$t$ path, any edge-simple $s$-$t$ path on which $e$ does not appear must have weight at least $d$ in the final graph regardless of how the process continues.
    
    For any $k \in \mathbb{Z}_{\geq 0}$, let $I_k$ be the event that $X_e$ gets incremented at least $k$ times from this point onwards. We have $\mathbb{P}(I_k) = (1 - p_e)^k \geq (1 - p_{\max})^k$. Conditioning on $I_k$, we have $\overline{w}(P) \leq d - k$ in the final graph. Since any path not using $e$ must have weight at least $d$, $P_1$ must also use $e$ and have weight at most $\overline{w}(P_1) \leq d - k$. Since $P_1$ uses $e$, $P_2$ cannot use $e$, and must thus have weight at least $d$. Thus, $I_k$ implies $\overline{w}(P_2) - \overline{w}(P_1) \geq k$.
\end{proof}

\begin{claim}\label{clm:gap-prob}
    For each $\ell \in [L]$, 
    \begin{equation*}
        \mathbb{P}(I_{\mathrm{gap},\ell}) \ge 1-\frac{w(u,v)}{D_\ell} (\epsilon^{-1} + Ls) O(\log n).
    \end{equation*}
\end{claim}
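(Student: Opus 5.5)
Fix $\ell \in [L]$. The failure event $\overline{I_{\mathrm{gap},\ell}}$ happens in one of two ways: either $u$ and $v$ are separated by $\mathcal{C}_\ell$, or they lie in a common cluster $C \in \mathcal{C}_\ell$ with $r \notin C$ and the gap $\overline{w}(P_2)-\overline{w}(P_1)$ is at most $2\widetilde{w}(u,v)$. I would bound these two contributions separately and add them.

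\textbf{Step 1 (separation).} Every level $\mathcal{C}_{\ell'}$ with $\ell' \ge \ell$ is an LDD of a subgraph of $G$. In that subgraph, distances between endpoints of the still-uncut edge $(u,v)$ are at most $w(u,v)$. Hence each such level separates $u,v$ with probability at most $s\,w(u,v)/D_{\ell'} \le s\,w(u,v)/D_\ell$. A union bound over the at most $L$ levels from $\ell$ up to $L$ gives
\begin{equation*}
\mathbb{P}(u,v \text{ separated by } \mathcal{C}_\ell) \le L s \cdot \frac{w(u,v)}{D_\ell}.
\end{equation*}
This is the $Ls$ term in the claim.

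\textbf{Step 2 (gap, conditioned on coclustering).} Condition on the whole hierarchy $\{\mathcal{C}_{\ell'}\}$, and assume $u,v\in C\in\mathcal{C}_\ell$ and $r\notin C$. Every edge $e\in\delta(C)$ is cut at some level $\ge \ell$, so $\ell_e \ge \ell$ and $D_{\ell_e}\ge D_\ell$. Hence $X_e \sim \mathrm{Geom}(p_e)$ with
\begin{equation*}
p_e = \frac{3\ln nW}{\epsilon D_{\ell_e}} \le p_{\max} := \frac{3\ln nW}{\epsilon D_\ell}.
\end{equation*}
I would also condition on the variables $X_e$ for $e\notin\delta(C)$. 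Then apply Lemma~\ref{lem:gap} with $s:=r$, $t:=u$, $E':=\delta(C)$, and base weights $w(e)+\epsilon D_{\ell_e}$ on $E'$ and $\overline{w}(e)$ elsewhere. The lemma's hypothesis holds because $r\notin C$ and $u\in C$, so every $r$-$u$ path, in particular a shortest one, uses an edge of $\delta(C)$. If $p_{\max}\ge 1$, the claimed bound is trivial since its right side is then negative for a suitable constant. Otherwise the lemma gives
\begin{equation*}
\mathbb{P}(\overline{w}(P_2)-\overline{w}(P_1)\le k)\le 1-(1-p_{\max})^{k+1}\le (k+1)p_{\max}
\end{equation*}
for every integer $k$. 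The lemma holds for any choice of $P_1$, so the ``regardless of the choice of $P_1$'' clause is covered, as the worst-case gap is what the lemma controls.

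\textbf{Step 3 (removing the dependence on $\widetilde{w}(u,v)$).} The main obstacle is that the threshold $k=2\widetilde{w}(u,v)$ is itself random. It depends on $X_{(u,v)}$, but $(u,v)\notin\delta(C)$, so it is fixed by the conditioning in Step 2. Using $\widetilde{w}\le w+\epsilon D_{\ell_{(u,v)}}$, the failure probability given the hierarchy is at most
\begin{equation*}
(2\widetilde{w}(u,v)+1)\,p_{\max} = O\!\left(\frac{\widetilde{w}(u,v)\,\log n}{\epsilon D_\ell}\right).
\end{equation*}
Here I use $\widetilde{w}(u,v)\ge 1$ and $\ln nW=O(\log n)$. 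Next, take expectation over the lower levels $\mathcal{C}_{\ell-1},\dots,\mathcal{C}_1$ given $\mathcal{C}_L,\dots,\mathcal{C}_\ell$ with $u,v$ coclustered in $\mathcal{C}_\ell$. In that case $(u,v)$ can only be cut at a level $\ell'<\ell$, and each such level cuts it with probability at most $s\,w(u,v)/D_{\ell'}$ while adding $\epsilon D_{\ell'}$. So
\begin{equation*}
\mathbb{E}[\widetilde{w}(u,v)\mid \mathcal{C}_L,\dots,\mathcal{C}_\ell] \le w(u,v)(1+\epsilon L s).
\end{equation*}
Taking expectations, the Step 2 contribution is
\begin{equation*}
O\!\left(\frac{w(u,v)}{D_\ell}(\epsilon^{-1}+Ls)\log n\right).
\end{equation*}
Adding the bound from Step 1 completes the claim. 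The delicate point to verify is that conditioning on the hierarchy and on the $X_e$ outside $\delta(C)$ leaves the $X_e$ for $e \in \delta(C)$ independent geometrics. This holds because each $X_e$ is sampled independently given the hierarchy.
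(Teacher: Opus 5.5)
Your proof is correct and follows the paper's argument essentially step for step. You bound the separation probability via the slack, condition on the hierarchy and on the $X_e$ outside $\delta(C)$, and apply \Cref{lem:gap} with $E'=\delta(C)$; you then average over the lower levels using $\mathbb{E}[\widetilde w(u,v)\mid \mathcal{C}_L,\dots,\mathcal{C}_\ell]\le w(u,v)(1+\epsilon Ls)$. The only cosmetic difference is that your Step~1 uses a crude union bound giving $Ls\,w(u,v)/D_\ell$ where the paper sums a geometric series to get $2s\,w(u,v)/D_\ell$, and either is absorbed by the $(\epsilon^{-1}+Ls)O(\log n)$ term.
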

\begin{proof}
    To make conditionals more readable, we write $\mathcal{R}_1$ to denote the levels $\mathcal{C}_L, \dots, \mathcal{C}_\ell$ of the hierarchy, $\mathcal{R}_2$ to denote the remaining levels, $\mathcal{R}_3$ to denote the values $X_e$ for edges $e \not\in \delta(C)$ where $C$ is the level-$\ell$ cluster containing $u$, and $\mathcal{R}_4$ the values $X_e$ for all remaining edges. We denote realizations of these by $\mathcal{R}_1', \dots, \mathcal{R}'_4$.
    
    Since the LDD algorithm has slack $s$, there exists a cluster $C \in \mathcal{C}_\ell$ for which $u, v \in C$ with probability at least
    \begin{equation*}
        1 - \sum_{\ell' = \ell}^{L} s \cdot \frac{w(u, v)}{D_{\ell'}} \geq 1 - 2 s \frac{w(u, v)}{D_\ell}.
    \end{equation*}
    Suppose this occurs, and fix levels $\mathcal{C}_L, \dots, \mathcal{C}_\ell$ of the hierarchy for which $u, v \in C \in \mathcal{C}_\ell$. The expected $\widetilde{w}(u, v)$ satisfies
    \begin{equation*}
        \mathbb{E}[\widetilde{w}(u, v) \mid \mathcal{R}_1 = \mathcal{R}'_1] \leq w(u, v) + \sum_{\ell' = 0}^{\ell - 1} \epsilon D_{\ell'} \cdot s \frac{w(u, v)}{D_{\ell'}} \leq w(u, v) \left(1 + \epsilon L s\right).
    \end{equation*}
    Next, fix the levels $\mathcal{C}_{\ell - 1}, \dots, \mathcal{C}_0$ and values $X_e$ for $e \not\in \delta(C)$ arbitrarily. Let $p_{\mathrm{max}} \coloneqq \frac{3 \ln {nW}}{\epsilon D_{\ell}}$. If $p_{\mathrm{max}} \geq 1$, then the claim holds trivially, and we can thus assume $p_{\mathrm{max}} < 1$. We now apply \Cref{lem:gap} to $E' \coloneqq \delta(C)$, $k = \lfloor 2\widetilde{w}(u, v) \rfloor + 1$, and $G' \coloneqq (V, E, w')$ with
    \begin{equation*}
        w'(e) \coloneqq w(e) + \begin{cases}
            \epsilon D_{\ell_e} - X_e   &e \not\in \delta(C)\\
            \epsilon D_{\ell_e}         &e \in \delta(C)
        \end{cases}.
    \end{equation*}
    This gives
    \begin{equation*}
        \mathbb{P}\left(I_{\text{gap}, \ell} \mid \mathcal{R}_{1, 2, 3} = \mathcal{R}_{1, 2, 3}'\right) \geq (1 - p_{\mathrm{max}})^{k} \geq 1 - k \cdot p_{\mathrm{max}} \geq 1 - \frac{\widetilde{w}(u, v)}{D_\ell} \epsilon^{-1} \cdot 10 \ln nW.
    \end{equation*}
    We can thus apply the law of total probability to obtain
    \begin{align*}
        \mathbb{P}(I_{\text{gap}, \ell} \mid \mathcal{R}_1 = \mathcal{R}'_1)
            &\geq \sum_{\mathcal{R}'_{2, 3}} \left(1 - \frac{\widetilde{w}(u, v)}{D_\ell} \epsilon^{-1} \cdot 10 \ln nW\right) \mathbb{P}(\mathcal{R}_{2, 3} = \mathcal{R}'_{2, 3}\mid \mathcal{R}_1 = \mathcal{R}'_1)\\
            &= 1 - \frac{10 \ln nW}{D_\ell} \epsilon^{-1} \sum_{\mathcal{R}'_{2, 3}} \widetilde{w}(u, v) \cdot\mathbb{P}(\mathcal{R}_{2, 3} = \mathcal{R}'_{2, 3}\mid \mathcal{R}_1 = \mathcal{R}'_1)\\
            &\geq 1 - \frac{10 \ln nW}{D_\ell} \epsilon^{-1} \left(w(u, v) (1 + \epsilon L s)\right)\\
            &= 1 - \frac{w(u, v)}{D_\ell} (\epsilon^{-1} + Ls) 10 \ln nW.
    \end{align*}
    Thus,
    \begin{align*}
        \mathbb{P}(I_{\text{gap}, \ell})
            &\geq 1 - 2 s \frac{w(u, v)}{D_\ell} - \frac{w(u, v)}{D_\ell} (\epsilon^{-1} + Ls) 10 \ln nW\\
            &= 1 - \frac{w(u, v)}{D_\ell} (\epsilon^{-1} + Ls) O(\log n)
    \end{align*}
    as desired.
\end{proof}

We now have everything required to prove \Cref{lem:hierarchy-to-lsst}. A union bound over all bad events combined with \Cref{clm:gap-prob} yields
\begin{equation*}
\mathbb{P}(I_{\mathrm{diam}} \land I_{\overline{w} = \widetilde{w}} \land I_{\mathrm{gap}, \ell}) \ge 1-\frac{w(u,v)}{D_\ell}(\epsilon^{-1}+Ls)O(\log n)-\frac{2}{nW}.    
\end{equation*}
Since $D_\ell = O(\alpha nW)$, the second term dominates. Combining this with \Cref{clm:good-events} yields
\begin{equation*}
    \mathbb{P}(\dist_{\widetilde{T}}(u, v) \leq 2c D_\ell) \ge 1-\frac{w(u,v)}{D_\ell}(\epsilon^{-1}+Ls)O(\log n).
\end{equation*}
Finally, if $I_{\text{diam}}$ occurs, $\dist_{\widetilde{T}}(u,v)\leq 2c D_L$, and $\dist_{\widetilde{T}}(u,v)\leq 2n D_L = O(\alpha n^2 W)$ holds always as the maximum weight of an edge in $\widetilde{T}$ is $W + \epsilon D_L$, thus the contribution to $\mathbb{E}[\dist_{\widetilde{T}}(u,v)]$ from $I_{\text{diam}}$ not occurring is at most $O\left(\frac{\alpha n^2 W}{n^3 W}\right) = O(1)$. Therefore, as $w(u, v) \geq 1$,
\begin{align*}
\mathbb{E}[\dist_{\widetilde T}(u,v)]
&\leq O(c) + \sum_{\ell=0}^{L-1} 2cD_{\ell+1} \mathbb P(2cD_\ell < \dist_{\widetilde T}(u,v) \le 2cD_{\ell+1})\\
&\leq w(u,v) \cdot c(\epsilon^{-1}+Ls)O(\log n) \sum_{\ell=0}^{L-1}\frac{D_{\ell+1}}{D_\ell}\\
&= w(u,v)\cdot c \alpha L (\epsilon^{-1}+Ls)O(\log n).
\end{align*}
This completes the proof of \Cref{lem:hierarchy-to-lsst}.

\section{Proof of the Main Theorems}

\LowStretchTree*

\begin{proof}[Proof for GEOM]
    We use GEOM, given in \Cref{alg:geom}, as our LDD algorithm. By \Cref{lem:geom-distance-preserving} it is $(\beta, \gamma)$-distance preserving for $\beta, \gamma = O(\log n)$. By \Cref{lem:dist-to-diam}, we can choose $\alpha = \Theta(\log n)$ such that $\textsc{LDDHierarchy}(G, \alpha, \textsc{GEOM})$ is $(\epsilon,c)$-diameter preserving with $\epsilon = \frac{1}{L s+\beta}$ and $c = O(1)$ with a sufficiently high probability. By \Cref{lem:hierarchy-to-lsst}, \Cref{alg:low_stretch} returns a spanning tree that satisfies 
    \begin{align*}
        \mathbb{E}[\dist_{\widetilde{T}}(u, v)]
            &\leq \dist_G(u, v) \cdot c \alpha L (\epsilon^{-1}+Ls)O(\log n)\\
            &= \dist_G(u, v) \cdot L^2 \cdot O(\log^3 n)\\
            &= \dist_G(u, v) \cdot O\left(\frac{\log^5 n}{(\log \log n)^2}\right).
    \end{align*}
    for any $u,v \in V$, as $L \coloneqq \lceil \log_\alpha(nW) \rceil = \Theta(\frac{\log n}{\log \log n})$ and $\epsilon = \Theta(1 / L \log(n))$. Since $\widetilde w(e) \ge w(e)$ for all $e$, we have $\dist_{T}(u, v) \le \dist_{\widetilde{T}}(u, v)$, which completes the proof of the desired stretch.

    The $\widetilde{O}(m)$-time complexity follows immediately from the existence of $\widetilde{O}(m)$-time implementations of GEOM and Dijkstra's algorithm.
\end{proof}

\begin{proof}[Proof for MPX]
    We use the same parameters as in the proof for GEOM, except that $\beta = O(\log^2 n)$. Substituting yields a slightly worse but sufficient guarantee of $O(\frac{\log^5 n}{\log \log n}) = O(\log^5 n)$. MPX can be implemented in $\widetilde{O}(m)$ time \cite{miller2013mpxoriginal}.
\end{proof}

\mainthm*

\begin{proof}
    Fix an arbitrary root vertex $r\in V$. \Cref{alg:low_stretch} using either GEOM or MPX and $\epsilon = \frac{\varepsilon}{2Ls+\beta}$ constructs perturbations $\Delta$ and the shortest-path tree $\widetilde{T}_r$ in the perturbed graph $\widetilde{G}$. These perturbations are nonnegative and satisfy
    \begin{equation*}
        \mathbb{E}[\Delta(e)] \leq \sum_{\ell = 0}^{L} \left(s \cdot \frac{w(e)}{D_\ell}\right) \cdot \epsilon D_\ell = w(e) \sum_{\ell = 0}^{L} \frac{\varepsilon s}{2Ls + \beta} \leq \varepsilon w(e).
    \end{equation*}
    Since $\widetilde{w}(e) \ge w(e)$ for all $e$, for every edge $(u,v)\in E$,
    \begin{equation*}
        \mathbb{E}\left[\frac{\dist_{\widetilde T_r}(u,v)}{\dist_{\widetilde G}(u,v)}\right]
        \leq \mathbb{E}\left[\frac{\dist_{\widetilde T_r}(u,v)}{\dist_{G}(u,v)}\right]
        \leq \widetilde{O}\left(\frac{1}{\varepsilon}\right),
    \end{equation*}
    where the last step follows from the proof of \Cref{thm:LowStretchTree}.
    \end{proof}

\section*{Acknowledgments}

The authors would like to thank Jakob Nogler, Gary Miller, and Goran Zuzic for helpful discussions.

\newpage
\bibliographystyle{alpha}
\bibliography{refs}

\pagebreak
\appendix

\section{Deferred Proofs}\label{sec:deferred}

\geom*

\begin{proof}
    For the diameter, consider a fixed sample $R$. By the tail bound of the geometric distribution,
    \begin{equation*}
        \mathbb{P}\left(R \geq \frac{D}{2}\right) \leq \left(1 - \frac{c_{\text{geom}} \log n}{D}\right)^{\frac{D}{2}}
            \leq \exp\left(-\frac{c_{\text{geom}}}{2} \log n\right)
            = n^{-\Theta(c_{\text{geom}})}
    \end{equation*}
    since each carved ball contains at least one vertex, at most $n$ samples are made, thus by the union bound, all samples are at most $\frac{D}{2}$ with probability $1 - n^{-\Theta(c_{\text{geom}})}$. If this occurs, each cluster has diameter at most $D$.

    For the cutting probability, fix some edge $e = (u, v)$ and center $x$. Without loss of generality, assume that $x$ is closer to $u$ than $v$ in $G[U]$, and let $R_0 \coloneqq \dist_{G[U]}(x, u)$ and $R_1 \coloneqq \dist_{G[U]}(x, v)$. Then, $e \in \delta(B(x, R))$ if and only if $R \in [R_0, R_1)$. Thus, conditioning on $R < R_0$, the probability that $B(x, R)$ cuts $e$ is zero. Conditioning on $R \geq R_0$, by the memorylessness of the geometric distribution,
    \begin{align*}
        \mathbb{P}(R \geq R_1 \mid R \geq R_0)
            &= (1 - p)^{R_1 - R_0}\\
            &= \left(1 - \frac{c_{\text{geom}} \log n}{D}\right)^{R_1 - R_0}\\
            &\geq 1 - \frac{R_1 - R_0}{D} c_{\text{geom}} \log n\\
            &\geq 1 - s \cdot \frac{w(e)}{D}.
    \end{align*}
    for $s \coloneqq c_{\text{geom}} \log n$, where in the last step we use $R_1 - R_0 \leq w(u, v)$ by the triangle inequality. Thus, the probability that $e$ gets cut by $B(x, R)$ given $R \geq R_0$ is at most the desired cutting probability. If this does not occur, then $e \in B(x, R)$, and cannot be cut in the future.
\end{proof}

\randpermprefmins*

\begin{proof}
    Consider fixing the permutation in the order $\pi_n, \pi_{n - 1}, \dots, \pi_1$, selecting each $\pi_i$ uniformly randomly from $[n] \setminus \{\pi_{i + 1}, \dots, \pi_n\}$. Let $X_i$ be an indicator that $\pi_i$ is a prefix minimum. Then,
    \begin{equation*}
        \mathbb{P}(X_i \mid \pi_{i + 1}, \dots, \pi_{n}) = \frac{1}{i}
    \end{equation*}
    since $X_i = 1$ if and only if $\pi_i = \min \left([n] \setminus \{\pi_{i + 1}, \dots, \pi_n\}\right)$ and $\pi_i$ is selected uniformly at random from this set of size $i$. Since this probability does not depend on $\pi_{i + 1}, \dots, \pi_n$, the variables $X_i$ are independent. Additionally,
    \begin{equation*}
        \mu \coloneqq \mathbb{E}\left[\sum_{i = 1}^{n} X_i\right] = \sum_{i = 1}^{n} \frac{1}{i} = H_n = \Theta(\log n)
    \end{equation*}
    where $H_n$ is the $n$\textsuperscript{th} harmonic number. Thus, by Chernoff (\Cref{thm:chernoffBound}), for $\delta \geq 1$, 
    \begin{equation*}
        \mathbb{P}\left(\sum_i X_i \geq (1 + \delta) \mu\right) \leq \exp\left(-\frac{\delta^2 \mu}{2 + \delta}\right) \leq \exp\left(-\mu \frac{\delta}{3}\right) = n^{-\Theta(\delta)}.
    \end{equation*}
    Selecting $\delta$ to be a desiredly large constant finishes the proof.
\end{proof}

\shortcutting*

\begin{proof}
First, pay $w(P)$ to account for the original cost of every edge. For each cluster $C_i \in \mathcal C$, choose the cheaper of the following two options:

\begin{itemize}
\item We pay $\Delta(P_i)$. If we do this, we may use all edges in $P_i$.
\item We pay $D+2M$. If we do this, we may shortcut through $C_i$ once. In particular, we may move between any two vertices of $C_i$ at a cost of at most $D$. We may additionally use the first and last edges on $P$ intersecting $C_i$, each of which increased in weight by at most $M$.
\end{itemize}

We construct a new $u$--$v$ path that respects these constraints. Let $x$ denote the current vertex, with initially $x \gets u$.

We repeat the following: Let $(x,y)$ be the edge of $P$ incident to $x$ such that $y$ is
closer to $v$ along $P$, and suppose $(x,y) \in P_i$. If we chose the first option for $C_i$, we traverse the edge $(x,y)$ and set $x \gets y$. Otherwise, among all endpoints of edges in $P_i$, let $z$ be the one closest to $v$ along $P$. We shortcut through $C_i$ from $x$ to $z$ and set $x \gets z$.

In each iteration, the number of hops remaining from the current vertex to $v$ along $P$ decreases by at least one. Consequently, every edge assigned to a cluster of the first type is traversed at most once, while each cluster of the second type is shortcut through at most once.
\end{proof}

\end{document}